\newtheorem{theorem}{Theorem}
\newtheorem{proposition}{Proposition}
\newtheorem{corollary}{Corollary}
\newtheorem{definition}{Definition}
\newenvironment{proof}[1][Proof]{\begin{trivlist}
\item[\hskip \labelsep {\bfseries #1}]}{\end{trivlist}}
\newenvironment{example}[1][Example]{\begin{trivlist}
\item[\hskip \labelsep {\bfseries #1}]}{\end{trivlist}}
\def\fract#1/#2{\leavevmode
 \kern.1em \raise .5ex \hbox{\the\scriptfont0 #1}%
 \kern-.1em $/$%
 \kern-.15em \lower .25ex \hbox{\the\scriptfont0 #2}%
}
\def\abs#1{\ensuremath{\lvert #1\rvert}}
\def\norm#1{\ensuremath{\lVert #1\rVert}}
\newcommand{\nat}{\mathbb N}
\newcommand{\tuple}[1]{\langle #1 \rangle}
\newcommand{\dist}{{\cal D}}
\newcommand{\Supp}{{\sf Supp}}
\newcommand{\powset}{{\cal{P}}}
\newcommand{\Plays}{{\sf Plays}}
\newcommand{\Hists}{{\sf Hists}}
\newcommand{\Last}{{\sf Last}}
\newcommand{\Inf}{{\sf Inf}}
\renewcommand{\P}{{\sf P}}
\newcommand{\B}{{\sf B}}
\newcommand{\Action}{{\sf Action}}
\newcommand{\State}{{\sf State}}
\newcommand{\q}{{\sf q}}
\title{Synchronizing Objectives for Markov Decision Processes}
\author{Laurent Doyen
\institute{LSV, ENS Cachan \& CNRS, France}
\email{doyen@lsv.ens-cachan.fr} \and Thierry Massart \qquad\qquad
Mahsa Shirmohammadi \institute{Universit\'e Libre de Bruxelles,
Brussels, Belgium\thanks{This work has been done in the MoVES
project (P6/39) which is part of the IAP-Phase VI Interuniversity
Attraction Poles Programme funded by the Belgian State, Belgian
Science Policy.}} \email{\quad thierry.massart@ulb.ac.be
\quad\qquad mahsa.shirmohammadi@ulb.ac.be} }
\begin{document}
\maketitle
\begin{abstract}
{\bf Abstract.} We introduce synchronizing objectives for Markov
decision processes (MDP). Intuitively, a synchronizing objective
requires that eventually, at every step there is a state 
which concentrates almost all the probability mass. 
In particular, it implies that the probabilistic system behaves in
the long run like a deterministic system: eventually, the current state of the MDP
can be identified with almost certainty.

We study the problem of deciding the existence of a strategy to
enforce a synchronizing objective in MDPs. We show that the
problem is decidable for general strategies, as well as for blind
strategies where the player cannot observe the current state of
the MDP. We also show that pure strategies are sufficient, but
memory may be necessary.
\end{abstract}
\section{Introduction}
A \emph{Markov decision process (MDP)} is a model for systems that
exhibit both probabilistic and nondeterministic behavior. MDPs
have been used to model and solve control problems for stochastic
systems where the nondeterminism represents the freedom of the
controller to choose a control action, while the probabilistic
component of the behavior describes the system response to control
actions. MDPs have also been adopted as models for concurrent
probabilistic systems, probabilistic systems operating in open
environments~\cite{SegalaT}, and under-specified probabilistic
systems \cite{BiancoA95}.


 Traditional objectives for MDP specify a set
$S$ of paths, where a path is an infinite sequence of states
through the underlying graph of the MDP. The value of interest is
the probability that an execution of the MDP under a given
strategy belongs to $S$. For example, a reachability objective
specifies all paths that visit a given target state $\ell$. A
typical qualitative question is to decide whether there exists a
strategy such that  a given state $\ell$ is reached with probability
$1$.


In this paper, we consider a different type of objectives which
specify a set of infinite sequences $\bar{X} = X_0, X_1, \dots$ of
probability distributions over the states~\cite{KVKA10}.
Intuitively, the distribution $X_i$ in the sequence gives for each
state $\ell$ the probability $X_i(\ell)$ to be in state $\ell$ at
step $i \geq 0$. We introduce \emph{synchronizing objectives}
which specify sequences of distributions in which the probability
tends to accumulate in a single state. We use the infinity norm as
a measure of the highest peak in a probability distribution~$X_i$
(i.e., $\norm{X_i} = \max_{\ell \in L} X_i(\ell)$) and we require
that the limit\footnote{ Since the limit may not exist in general,
we actually consider either $\liminf$ or $\limsup$.} of this
measure in the sequence is $1$. Intuitively, this requires that in
the long run, the MDP behaves like a deterministic system:
from some point on, at every step $i$ there is a state $\ell_i$ 
which accumulates almost all the probability. 
Note that satisfying such an
objective implies that there exists a state $\ell$ which is
reached with probability $1$. The converse does not hold because
reachability objectives do not require the visits to the target
state to occur after the same number of steps in (almost) all
executions of the MDP. We consider the problem of deciding if a
given MDP is synchronizing
for some strategy, 
We consider the general case where memoryful randomized strategies
are allowed, as well as the special case of blind strategies which
are not allowed to observe the current state of the MDP.


Defining objectives as a sequence of probability distributions
over states rather than a distribution over sequences of states is
a change of standpoint in the traditional approach to MDP
verification. Up to our knowledge, there are very few works in
this setting. We are aware of the work in~\cite{KVKA10} which
studies MDPs as generators of probability distributions with
applications in sensor networks and dynamical systems, and shows
that the resulting objectives are not expressible in known logics
such as PCTL${}^*$~\cite{AzizSB95,BiancoA95}. In their definition,
probability distributions over states are assigned a vector $v \in
\{0,1\}^k$ of truth values for a finite set of predicates
$\varphi_1, \dots, \varphi_k$ (which are linear constraints on the
probabilities such as $\varphi(X) \equiv X(\ell) + X(\ell') \leq
\frac{1}{2}$, for example). This can be viewed as a coloring of
the probability distributions using a finite number of colors, and
then objectives are languages of infinite words over the finite
alphabet of colors. It is shown that reachability of a given color
is undecidable for MDPs if arbitrary linear predicates are
allowed~\cite{KVKA10}. A decidability result is obtained if only
predicates of the form $\sum_{\ell \in T} X(\ell) > 0$ are
allowed. Synchronizing objectives cannot be expressed in the
framework of~\cite{KVKA10} using finite colorings as they require
a real-valued measure (namely, the infinite norm) to be assigned
to the probability distributions.


In~\cite{BeauquierRS02}, the monadic logic of probabilities is
introduced as a predicate logic which can express properties of
sequences of probability distributions. But because it allows
comparison of probabilities only with constants, it cannot express
synchronizing objectives which would require a quantification over
probability thresholds, such as
$\varphi(\bar{X}) \equiv \forall \epsilon > 0 \cdot \exists N
\cdot \forall i \geq N \cdot \exists \ell \in L: X_i(\ell) \geq 1-
\epsilon$, where $X_i$ is the probability distribution in position
$i$ in the sequence $\bar{X}$.



Synchronizing objectives generalize the notion of synchronizing
words. In a deterministic finite automaton, a word $w$ is
synchronizing if reading $w$ from any state of the automaton
always leads to the same state. It is sufficient to consider
finite words, and it is conjectured that if a synchronizing word
exists, then there exists one of length at most $(n-1)^2$ where
$n$ is the number of states of the automaton, known as the
\v{C}ern\'{y}'s conjecture. Several works have studied this
conjecture and related problems (see the survey
in~\cite{Volkov08}). Viewing deterministic automata as a special
case of MDP where all transitions have only one successor, a
synchronizing word can be seen as a blind strategy to ensure a
synchronizing objective. Note that we do not present a
generalization of \v{C}ern\'{y}'s conjecture since in our case,
strategies for MDPs are infinite objects. However, synchronizing
objectives provide an extension of the design framework for the
many applications of the theory of synchronizing words, such as
control of discrete event systems, planning, biocomputing, and
robotics~\cite{Volkov08}. For example, in probabilistic models of
DNA transcription, one may ask which molecules to introduce in a
cell in order to bring it to a single possible
state~\cite{Ben03,Volkov08}.


We prove that it is decidable to determine if a given MDP is
synchronizing for some strategy, either blind or general. We use
variants of the subset construction in the underlying graph of
MDPs to obtain a decidable characterization of synchronizing
strategies. Our results imply that pure strategies are sufficient
to satisfy a synchronizing objective, but we provide an example
showing that memory may be necessary, both with blind and general
strategies.


\section{Definitions.}
A \emph{probability distribution} over a finite set~$S$ is a
function $d : S \rightarrow [0, 1]$ such that $\sum_{s \in S} d(s)
= 1$. The \emph{support} of~$d$ is the set $\Supp(d) = \{s \in S
\mid d(s) > 0\}$.
 $\dist(S)$ denotes the set of all probability distributions
 on~$S$, and
$\powset(S)$ the power set of~$S$.


\paragraph{Markov decision processes.}
A \emph{Markov decision process} (MDP) is a tuple
$M=\tuple{L,\mu_{0},\Sigma,\delta}$ where $L$ is a finite set of
states, $\mu_0 \in \dist(L)$ is an initial probability
distribution over states, $\Sigma$ is a finite set of actions,
$\delta: L \times \Sigma \to \dist(L)$ is a probabilistic transition
function that assigns to each pair of states and actions, a
probability distribution over successor states.
A \emph{Markov chain} 
is a special case of MDPs with only one action ($\abs{\Sigma}=1$).
Markov chains are therefore generally viewed as a tuple $M =
\tuple{L,\mu_0,\delta}$ where $\delta: L \to \dist(L)$.
For an action~$\sigma \in \Sigma$ and a state~$\ell \in L$,
let $Post_{\sigma}(\ell)=\Supp(\delta(\ell,\sigma))$, 
and for a set~$s \subseteq L$, let $Post_{\sigma}(s)=\cup _{\ell \in s} \, Post_{\sigma}(\ell)$.



\begin{example}
\figurename~\ref{f5}(a) shows an MDP with four states and alphabet
$\Sigma = \{\sigma_1,\sigma_2\}$. The initial probability
distribution is $\mu_0(1)=1$ and $\mu_0(i)=0$ for $i \in
\{2,3,4\}$, and the probabilistic transition function $\delta$ in
state~$1$ is such that $\delta(1,\sigma_1)(2) =
\delta(1,\sigma_1)(3) = 1/2$ and $\delta(1,\sigma_2)(1)=1$.
\end{example}


We describe the behavior of an MDP as a one-player stochastic game
played for infinitely many rounds. In the first round, the game
starts in state $\ell$ with probability $\mu_{0}(\ell)$. In each
round, if the game is in the state $\ell$ and the player chooses
the action $\sigma \in \Sigma$, then the game moves to the
successor state $\ell'$ chosen with probability
$\delta(\ell,\sigma)(\ell')$, and the next round starts.
We consider two versions of this game. In both versions, the player knows the 
structure of the MDP. In the first version the
player has \emph{perfect information}, he can see the current
state of the game; in the second version the player is \emph{blind}, he is
not allowed to observe the current state of the game, and only knows 
the number of rounds that have been played so far.



A \emph{play} of the game is an infinite sequence of interleaved
states and actions $\pi=\ell_{0}\sigma_{0}\ell_{1} \cdots$ such
that $\ell_{i+1} \in Post_{\sigma_{i}}(\ell_{i})$ for all $i \geq 0$. 
The set of all plays over $M$ is denoted by~$\Plays(M)$. A
finite prefix $h=\ell_{0}\sigma_{0}\ell_{1} \cdots
\sigma_{n-1}\ell_{n}$ of a play~$\pi$ is called a \emph{history},
the last state of~$h$ is $\Last(h)=\ell_{n}$, the i$^{th}$ action and 
state of the of~$h$ is $\Action(h,i)=\sigma_{i}$ and $\State(h,i)=\ell_{i}$, 
and its length is $\abs{h} = n$. The set of all histories of plays is 
denoted by $\Hists(M)$.


\paragraph{Strategies and outcome.}
In the game, the choice of the action is made by the player
according to a strategy. Depending on what the player can observe
and record, he can use various classes of strategies. A
\emph{randomized strategy} (or simply a strategy) over an MDP~$M$
is a function $\alpha: \Hists(M) \to \dist(\Sigma)$. A \emph{pure}
(deterministic) strategy is a special case of randomized strategy
where for all $h \in \Hists(M)$, there exists an action $\sigma
\in \Sigma$ such that $\alpha (h)(\sigma)=1$.
A \emph{memoryless}
strategy is a randomized strategy $\alpha$ such that $\alpha(h_{1})=\alpha(h_{2})$
 for all $h_{1}, h_{2} \in
\Hists(M)$ with $\Last(h_{1})=\Last(h_{2})$. In this last case,  the player cannot
record the history of the play and makes a choice according to the
current state only. For convenience, we view pure strategies as
functions $\alpha : \Hists(M) \to \Sigma$, and memoryless
strategies as functions $\alpha: L \to \dist(\Sigma)$. Hence, a
pure memoryless strategy is a function $\alpha: L \to \Sigma$.


A strategy $\alpha$ is \emph{blind} if $\alpha(h_1) = \alpha(h_2)$
for all $h_{1}, h_{2} \in \Hists(M)$ such that $\abs{h_1} =
\abs{h_2}$. Blind strategies can be viewed as functions $\alpha:
\nat \to \dist(\Sigma)$ (or, $\alpha: \nat \to \Sigma$ for pure
blind strategies) which assign in each round a probability
distribution over actions. Sometimes we talk about
\emph{perfect-information} strategies to emphasize when we
consider strategies that are not necessarily blind.


The \textit{outcome} of the game played on an MDP $M =
\tuple{L,\mu_{0},\Sigma,\delta}$ using a strategy $\alpha$ is the
infinite sequence $X^{\alpha}_{0} X^{\alpha}_{1} \dots $ of
probability distributions over the set of states $L$, where
$X^{\alpha}_{0} = \mu_{0}$ and for all $n > 0$,
$$ \textstyle X_{n}^{\alpha}(\ell)=\sum_{h \in \Hists(M): \Last(h)=\ell,\abs{h}=n}Pr^{\alpha}(h)$$
where the probability $Pr^{\alpha}(h)$ of a history $h=\ell_{0}\sigma_{0}\ell_{1}
\cdots \sigma_{n-1}\ell_{n}$ under strategy $\alpha$ is
$$ \textstyle Pr^{\alpha}(h)= \mu_0( \ell_{0}) \cdot \prod_{j=1}^{n}
\alpha(\ell_{0} \sigma_{0}\dots \ell_{j-1})(\sigma_{j-1}) \cdot
\delta(\ell_{j-1},\sigma_{j-1})(\ell_{j}).$$


\paragraph{Synchronizing objectives.}
The \emph{norm} of a probability distribution~$X$ over~$L$ is
$\norm{X} = \max_{\ell \in L} X(\ell)$. We say that the MDP~$M$
with strategy~$\alpha$ is \emph{strongly synchronizing} if
\begin{equation}
 \liminf_{n \to \infty} \norm{X^{\alpha}_{n}} = 1,
\end{equation}
and that it is \emph{weakly synchronizing} if
\begin{equation}
 \limsup_{n \to \infty} \norm{X^{\alpha}_{n}} = 1.
\end{equation}

Intuitively, an MDP is synchronizing if the probability mass tends
to concentrate in a single state, either at every step from some point on (for strongly
synchronizing), or at infinitely many steps (for weakly synchronizing). 
Note that equivalently, $M$ with strategy $\alpha$ is strongly synchronizing if the limit
$\lim_{n \to \infty} \norm{X^{\alpha}_{n}}$ exists and equals~$1$.
In this paper, we are interested in the problem of deciding if a
given MDP is synchronizing for some strategy. We consider the
problem for both perfect-information and blind strategies.


\paragraph{Recurrent and transient states.}
A state $\ell' \in L$ is \emph{accessible} from a state $\ell \in
L$ (denoted $\ell \rightarrow \ell'$), if there is a history
$h=\ell_{0}\sigma_{0}\ell_{1} \cdots \sigma_{n-1}\ell_{n}$ with
$\ell_{0} = \ell$ and $\ell_{n} = \ell'$.
If both $\ell \rightarrow \ell'$ and $\ell' \rightarrow \ell$
hold, then we say that $\ell$ and $\ell'$ are strongly connected
(denoted $\ell \leftrightarrow \ell'$). This induces an
equivalence relation called \emph{accessibility relation}. An MDP
is \emph{strongly connected}, if all pairs of states $\ell, \ell'
\in L$ are strongly connected. A state accessible from a state of $\Supp(\mu_0)$ is simply called accessible state.


For a Markov chain $M$, the state $\ell$ is \emph{recurrent} if
all accessible states from $\ell$ can access $\ell$ (i.e., $\ell$
and $\ell'$ are strongly connected for all $\ell'$ such that $\ell
\rightarrow \ell'$), and the state $\ell$ is \emph{transient} if
there exists some state $\ell'$ such that $\ell'$ is accessible
from $\ell$, but $\ell$ is not accessible from $\ell'$.
The next proposition follows from standard results~\cite{FV97}.

\begin{proposition}\label{proposition-mc}
Given a Markov chain~$M$, let $X_0, X_1, \dots$ be the sequence of
probability distributions of~$M$. Then $\limsup_{n \to \infty}
X^n (\ell) = 0$ for all transient states $\ell \in L$, and
$\limsup_{n \to \infty} X^n (\ell) > 0$ for all recurrent states
$\ell \in L$.
\end{proposition}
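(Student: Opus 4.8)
The plan is to invoke the standard decomposition of a finite Markov chain into transient states and bottom strongly connected components (the recurrent classes), and to treat the two parts of the statement separately. Throughout I assume, as is implicit in the setup, that every state of $L$ is accessible — otherwise $X^n(\ell) = 0$ for all $n$ and the claim is vacuous (or false as literally stated) for such an $\ell$. Write $\delta^n(\ell)(\ell')$ for the probability of being in $\ell'$ after $n$ steps starting from $\ell$, so that $X^n(\ell') = \sum_{\ell \in \Supp(\mu_0)} \mu_0(\ell)\,\delta^n(\ell)(\ell')$.

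For a transient state $\ell$, the key fact is that $\sum_{n \geq 0} \delta^n(\ell')(\ell) < \infty$ for every state $\ell'$: starting from $\ell$ itself the number of visits to $\ell$ is geometrically distributed with parameter $1 - f_{\ell\ell}$, where $f_{\ell\ell} < 1$ is the return probability, hence has finite expectation, and the expected number of visits from an arbitrary $\ell'$ is no larger. Summing over the finitely many states of $\Supp(\mu_0)$ gives $\sum_{n \geq 0} X^n(\ell) = \sum_{\ell' \in \Supp(\mu_0)} \mu_0(\ell') \sum_{n \geq 0} \delta^n(\ell')(\ell) < \infty$. A convergent series has vanishing terms, so $\lim_{n\to\infty} X^n(\ell) = 0$, and in particular $\limsup_{n \to \infty} X^n(\ell) = 0$.

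For a recurrent state $\ell$, let $C$ be its recurrent class; $C$ is a bottom SCC, so once the chain enters $C$ it never leaves, and every state of $C$ is positive recurrent because $L$ is finite. Since $\ell$ is accessible there are $\ell' \in \Supp(\mu_0)$ and $m \geq 0$ with $q := \mu_0(\ell')\,\delta^m(\ell')(\ell) > 0$, whence $X^m(\ell) \geq q$. By the Chapman--Kolmogorov identity, for every $n \geq 0$,
$$ X^{m+n}(\ell) \;\geq\; X^m(\ell)\cdot \delta^n(\ell)(\ell) \;\geq\; q\cdot \delta^n(\ell)(\ell). $$
It remains to show $\limsup_{n \to \infty} \delta^n(\ell)(\ell) > 0$. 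Here one must not claim convergence of $\delta^n(\ell)(\ell)$, since $C$ may be periodic; instead I would apply the ergodic theorem for the finite irreducible chain obtained by restricting to $C$, which gives that the Cesàro averages $\frac1N \sum_{n=1}^N \delta^n(\ell)(\ell)$ converge to $\pi_C(\ell)$, where $\pi_C$ is the unique stationary distribution on $C$, and $\pi_C(\ell) > 0$. Since $\limsup$ of a sequence is at least the limit of its Cesàro averages, $\limsup_{n\to\infty}\delta^n(\ell)(\ell) \geq \pi_C(\ell) > 0$, and therefore $\limsup_{n\to\infty} X^n(\ell) \geq q\,\pi_C(\ell) > 0$.

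The only real subtlety is this last periodicity point in the recurrent case — avoiding a spurious claim of convergence and instead arguing through Cesàro means (equivalently, passing to an arithmetic progression of common difference $d$, the period of $C$, along which $\delta^n(\ell)(\ell)$ does converge, to $d\,\pi_C(\ell)$). Everything else is bookkeeping with the transient/recurrent decomposition and the finiteness of $L$.
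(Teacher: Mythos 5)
Your proof is correct. The paper offers no proof of this proposition at all—it only cites standard results from~\cite{FV97}—and your argument is precisely that standard route: finiteness of the expected number of visits (via the geometric return-time bound) for transient states, and the ergodic/Ces\`aro theorem for the finite irreducible bottom class for recurrent states, with the periodicity issue and the implicit accessibility assumption both handled correctly.
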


\paragraph{Subset constructions.}
We define two important constructions based on the subset
construction idea. Subset construction is a standard technique to
compute, from a nondeterministic finite automaton~$N$, an
equivalent deterministic automaton~$D$ (for language equivalence),
where one state of $D$ corresponds to the set of possible states
(called a cell) in which $N$ can be. We define two kinds of subset
constructions on MDPs, the \emph{perfect-information subset
construction}, and the \emph{blind subset construction}. As usual,
each state of the subset constructions is a subset of states of
the MDP (i.e., a cell). In our case, the main difference lies in
the alphabet. In the perfect-information subset construction,
the selection of the next action depends on the current state (each
state of a cell can independently choose an action), while in 
the blind subset constructions the next action is independent 
of the state (all states of a cell have to choose the same action).
Thus, an action in the perfect-information subset
construction is a function $\hat{\sigma}: L \to \Sigma$ which
assigns to each state $\ell \in L$ its choice among the actions in
$\Sigma$.

\begin{figure}[t]
\begin{center}
\includegraphics[scale=0.5]{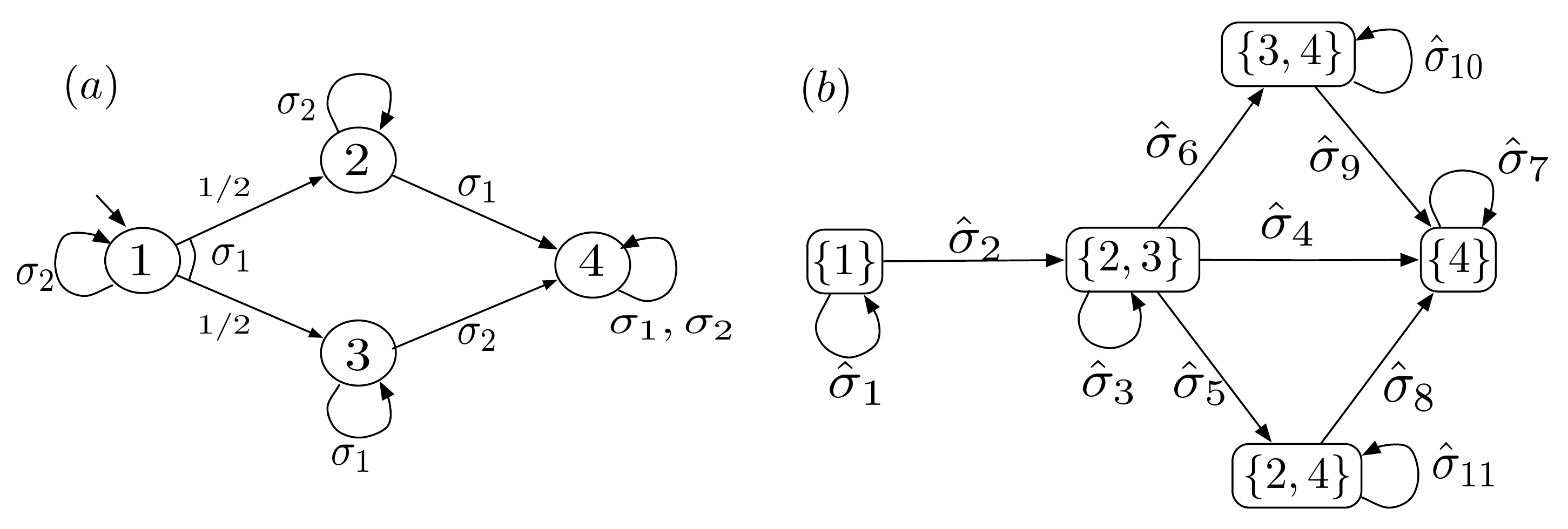}
\caption{(a) shows an MDP, and (b) shows the accessible states of its perfect information subset
construction.}\label{f5}
\end{center}
\end{figure}

\begin{definition}[Perfect-information subset construction of an MDP]
For an  MDP $M=~\tuple{L,\mu_{0},\Sigma,\delta},~$ the perfect-information subset construction is an automaton
 $M^{\P}=\tuple{\mathcal{L},L_{I},\hat{ \Sigma}, \delta^{\P}}$ where
 $\mathcal{L} = \powset(L)$ $ \setminus \{ \emptyset \}$, $L_{I} =\Supp(\mu_{0})$,
 $\hat{\Sigma}=\{\hat{\sigma} \mid \hat{\sigma}: L \to \Sigma\}$ is the alphabet, and $\delta^{\P} : \mathcal{L} \times \hat{\Sigma} \to \mathcal{L}$ where
 for all $s_{1}, s_{2} \in \mathcal{L}$ and $\hat{\sigma} \in \hat{\Sigma}$,
 we have $\delta^{\P}(s_{1}, \hat{\sigma})=s_{2}$ where $s_{2}=\cup_{\ell \in s_1} Post_{\hat{\sigma}(\ell)}(\ell)$.
\end{definition}

\begin{example}
\figurename~\ref{f5}(b) shows the  perfect information subset
construction $M^{\P}$ of the MDP drawn in \figurename~\ref{f5}(a)
(presented in the first example).
Let us present $\hat{\Sigma}$ in the table below. Each row
labelled by  a function $\hat{\sigma}_i$ ($i \in \{1,\dots,11\}$),
each column labelled by a state $\ell$; and each entry shows the
value of  $\hat{\sigma}_i(\ell)$.
\begin{displaymath}
\begin{array}{c|cccc}
                \text{}&1& 2& 3&4\cr
                \hline
                \hat{\sigma}_1&\sigma_2& \{\sigma_1,\sigma_2\}& \{\sigma_1,\sigma_2\}&\{\sigma_1,\sigma_2\}\cr
                \hat{\sigma}_2&\sigma_1& \{\sigma_1,\sigma_2\}& \{\sigma_1,\sigma_2\}&\{\sigma_1,\sigma_2\}\cr
                \hat{\sigma}_3&\{\sigma_1,\sigma_2\}& \sigma_2& \sigma_1&\{\sigma_1,\sigma_2\}\cr
                \hat{\sigma}_4&\{\sigma_1,\sigma_2\}& \sigma_1& \sigma_2&\{\sigma_1,\sigma_2\}\cr
                \hat{\sigma}_5&\{\sigma_1,\sigma_2\}& \sigma_2& \sigma_2&\{\sigma_1,\sigma_2\}\cr
                \hat{\sigma}_6&\{\sigma_1,\sigma_2\}& \sigma_1& \sigma_1&\{\sigma_1,\sigma_2\}\cr
                \hat{\sigma}_7&\{\sigma_1,\sigma_2\}& \{\sigma_1,\sigma_2\}& \{\sigma_1,\sigma_2\}&\{\sigma_1,\sigma_2\}\cr
                \hat{\sigma}_8&\{\sigma_1,\sigma_2\}&  \sigma_1 &\{\sigma_1,\sigma_2\}&\{\sigma_1,\sigma_2\}\cr
                \hat{\sigma}_9&\{\sigma_1,\sigma_2\}& \{\sigma_1,\sigma_2\}& \sigma_2&\{\sigma_1,\sigma_2\}\cr
                \hat{\sigma}_{10}&\{\sigma_1,\sigma_2\}& \{\sigma_1,\sigma_2\}& \sigma_1&\{\sigma_1,\sigma_2\}\cr
                \hat{\sigma}_{11}&\{\sigma_1,\sigma_2\}& \sigma_2& \{\sigma_1,\sigma_2\}&\{\sigma_1,\sigma_2\}
\end{array}%
\end{displaymath}
Note that, the function $\hat{\sigma}$ with
$\hat{\sigma}(\ell)=\{\sigma_1,\sigma_2\}$ (for a state $\ell$)
gives two different functions  where
$\hat{\sigma}_i(\ell)=\sigma_1$ and
$\hat{\sigma}_j(\ell)=\sigma_2$; but these two functions behaves
similarly.
\end{example}

A \emph{cycle} of $M^{\P}$ is a finite sequence
$C^{\P}=s_0~\hat{\sigma}_0 s_1 \dots s_{d-1} \hat{\sigma}_{d-1} s_d$ 
of interleaved cells and symbols such that  $\delta^{\P}(s_{j},
s_{j+1} = \hat{\sigma}_{j})$ for all $0 \leq j < d$, and $s_0=s_d$. 
Note that, in this definition, $d$ is the length of the cycle $C^{\P}$. We write $s \in C^{\P}$
if $s$ is one of the cells $s_j$ ($0 \leq j < d$) of the finite
sequence of the cycle $C^{\P}$. A \emph{simple cycle} is a cycle
where all cells $s_0, \dots , s_{d-1}$ are different.
We are interested in defining  some property on cycles  
of the perfect-information subset construction for a given MDP.


\begin{definition}[Recurrent cyclic sets]
Let $C^{\P}=s_0~\hat{\sigma}_0 ~\dots ~s_{d-1}~\hat{\sigma}_{d-1} s_d$ 
be a cycle  of  the perfect-information subset construction $M^{\P}$ 
for a given MDP $M$. A \emph{recurrent cyclic set} for the cycle  $C^{\P}$ is 
a sequence $G = g_0 g_1 \dots g_{d}$ such that $g_0 = g_d$, and
$\emptyset \neq g_i \subseteq s_i$ and $\cup_{\ell \in g_i} Post_{\hat{\sigma}_i(\ell)}(\ell) = g_{i+1}$
for all $0 \leq i < d$.
\end{definition}

A cycle $C^{\P}$ might have several recurrent cyclic sets.
A recurrent cyclic set $G$ for a given cycle $C^{\P}$, is said to be \textit{minimal} if there is no other recurrent cyclic set $G'$ ($G \neq G'$) such that
for $ 0 \leq i < d$, and for $g_i \in G$, $g'_i \in G'$, we have $g'_i \subseteq g_i$.
We denote the set of all minimal recurrent cyclic sets of the cycle $C^{\P}$
by $\Delta(C^{\P}) = \{G \mid G$ is a minimal recurrent cyclic set for the cycle $C^{\P}\}$.

\begin{example}
Consider the MDP~$M$ in \figurename~\ref{f7} (the initial
distribution is $\mu_0(1)=1$ and $\mu_0(i)=0$ for $i \in
\{2,\dots,9 \}$).
\figurename~\ref{f7}(b) shows one cycle of the perfect information subset
construction $M^{\P}$. Let us present $\hat{\Sigma}$ in the table below. Each row
labeled by  a function $\hat{\sigma}_i$ ($i \in \{1,\dots,4\}$),
each column labeled by a state $\ell$; and each entry shows the
value of  $\hat{\sigma}_i(\ell)$.
\begin{displaymath}
\begin{array}{c|ccccccccc}
                \text{}&1& 2& 3&4&5&6&7&8&9\cr
                \hline
                \hat{\sigma}_1& \{\sigma_1,\sigma_2\}& \{\sigma_1,\sigma_2\}& \{\sigma_1,\sigma_2\}&\{\sigma_1,\sigma_2\}& \{\sigma_1,\sigma_2\}& \{\sigma_1,\sigma_2\}& \{\sigma_1,\sigma_2\}& \{\sigma_1,\sigma_2\}& \{\sigma_1,\sigma_2\}\cr
                 \hat{\sigma}_2& \{\sigma_1,\sigma_2\}& \{\sigma_1,\sigma_2\}& \{\sigma_1,\sigma_2\}&\{\sigma_1,\sigma_2\}& \sigma_1& \{\sigma_1,\sigma_2\}& \{\sigma_1,\sigma_2\}& \{\sigma_1,\sigma_2\}& \{\sigma_1,\sigma_2\}\cr
                         \hat{\sigma}_3& \{\sigma_1,\sigma_2\}& \{\sigma_1,\sigma_2\}& \{\sigma_1,\sigma_2\}&\{\sigma_1,\sigma_2\}& \sigma_2& \{\sigma_1,\sigma_2\}& \{\sigma_1,\sigma_2\}& \{\sigma_1,\sigma_2\}& \{\sigma_1,\sigma_2\}\cr
                     \hat{\sigma}_4& \{\sigma_1,\sigma_2\}& \{\sigma_1,\sigma_2\}& \{\sigma_1,\sigma_2\}&\{\sigma_1,\sigma_2\}&\{\sigma_1,\sigma_2\}& \{\sigma_1,\sigma_2\}& \{\sigma_1,\sigma_2\}& \{\sigma_1,\sigma_2\}& \{\sigma_1,\sigma_2\}
\end{array}%
\end{displaymath}

For the cycle $C^{\P}=\{2,5,8\}~\hat{\sigma}_2$ $\{3,5,6\}~\hat{\sigma}_3$ $\{4,7,9\}$ $\hat{\sigma}_4~\{2,5,8\}$,
the set of minimal recurrent cyclic sets is 
$\Delta(C^{\P})=\{\{\{2\},\{3\},\{4\} \},\{\{5\},\{6\},\{7\} \}\}$.
The elements of $\Delta(C^{\P})$ are not comparable.
\end{example}

\begin{figure}[t]
\begin{center}
\includegraphics[scale=0.5]{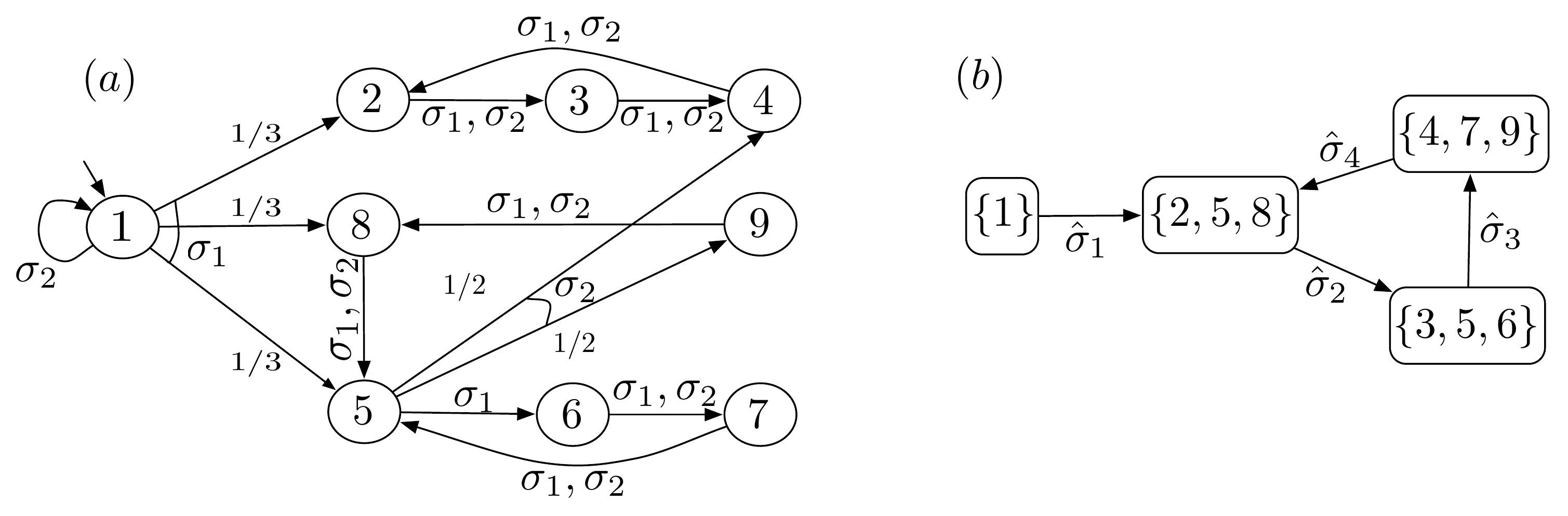}
\caption{(a) shows an MDP, and (b) shows some part of its perfect information subset
construction.}\label{f7}
\end{center}
\end{figure}


The blind subset construction for an MDP is a special case of its
perfect information subset construction where the action functions 
$\hat{\sigma} \in \hat{\Sigma}$ are restricted to constant functions. 
In each cell, all states have to choose the same action.
\\
\begin{definition}[Blind subset construction of an MDP]
The blind subset construction for a given MDP $M
=\tuple{L,\mu_{0},\Sigma,\delta}$ is an automaton
$M^{\B}=\tuple{\mathcal{L},L_{I}, \Sigma, \delta^{\B}}$ where
 $\mathcal{L} = \powset(L) \setminus \{ \emptyset \}$, $L_{I} =\Supp(\mu_{0})$,
 and for all $s_{1}, s_{2} \in \mathcal{L}$ and $\sigma \in \Sigma$,
 we have $\delta^{\B}(s_{1}, \sigma)=s_{2}$ where $s_2 = Post_{\sigma}(s_{1})$.
\end{definition}

We denote cycles in the blind subset construction by $C^{\B}$.


\section{Synchronizing Objectives for Perfect-Information Strategies}
We have defined a perfect-information one-player stochastic game in which the
player can see the current state of the game and record the
sequence of visited states. We show that synchronizing strategies
can be characterized in the perfect-information subset
construction, giving a decidability result. We also show in the
next example that memory may be necessary.


\begin{example}
Consider the MDP~$M$ in \figurename~\ref{f6} (the initial
distribution is $\mu_0(1)=1$ and $\mu_0(i)=0$ for $i \in
\{2,\dots,5 \}$), and let $\alpha$ be the strategy defined as
follows: $\alpha((L \times \Sigma)^{*}\ell)(\sigma)=1/2$ for all
$\sigma \in \Sigma$ and $\ell \in \{1,3,4,5\}$, and for the
histories ending in the state $2$,
\begin{displaymath}
 \alpha((L \times \Sigma)^{*}\ell~\Sigma~2)(\sigma)= \left\{
     \begin{array}{ll}
       1& $if$~\ell=1 $ and $ \sigma=\sigma_2,\\
       1 &$if$~\ell\neq 1 $ and $ \sigma=\sigma_1,\\
       0& $otherwise$.
     \end{array}
   \right.
\end{displaymath}
\end{example}


\begin{figure}[t]
\begin{center}
\includegraphics[scale=0.5]{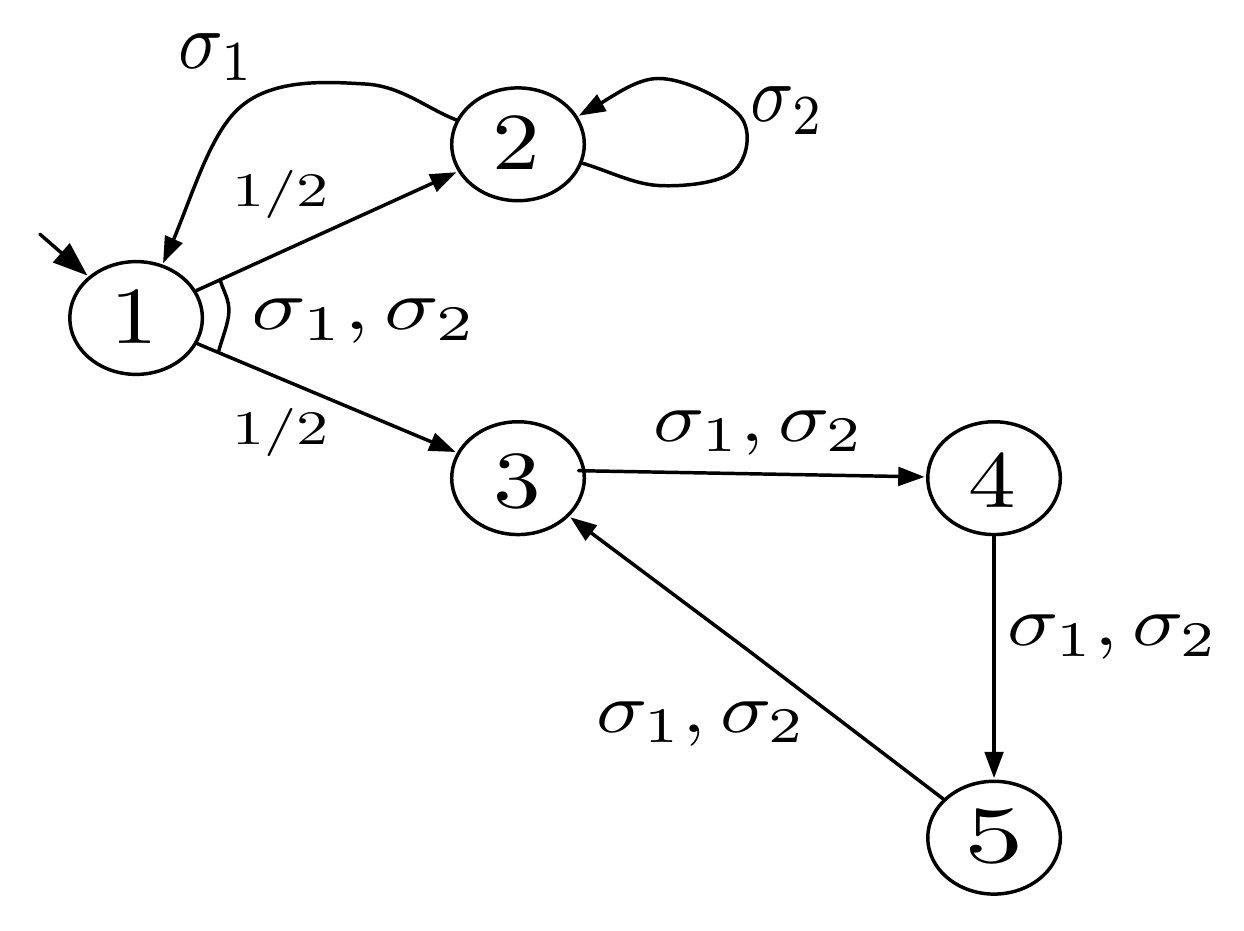}
\caption{An MDP where memory is necessary to win the strongly
synchronizing objective.}\label{f6}
\end{center}
\end{figure}


\indent In this example, it is easy to check that the strategy
$\alpha$ is strongly synchronizing. In state $2$, it plays
$\sigma_1$ and $\sigma_2$ in alternation in order to ensure
synchronization with the cycle $3,4,5$ of length $3$. However,
none of the memoryless strategies is strongly synchronizing,
showing that memory is necessary. This example also shows that
memory is necessary for weakly synchronizing objective, as well as for
blind strategies.


\begin{proposition}
 For both strongly and weakly synchronizing objectives, memoryless
strategies are not sufficient in MDPs.
\end{proposition}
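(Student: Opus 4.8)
The plan is to take the MDP $M$ of \figurename~\ref{f6} as a single witness for both objectives at once, using the elementary observation that a strongly synchronizing strategy is also weakly synchronizing (since $\norm{X_n}\le 1$ forces $\limsup_n\norm{X_n}\ge\liminf_n\norm{X_n}$) and, dually, that a strategy which fails to be weakly synchronizing is not strongly synchronizing either. It therefore suffices to prove two things: (i) the strategy $\alpha$ of the preceding example is strongly synchronizing on $M$, and (ii) no memoryless strategy $\beta\colon L\to\dist(\Sigma)$ is weakly synchronizing on $M$.

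For (i) I would simply track the outcome $X_0^\alpha,X_1^\alpha,\dots$ explicitly. As noted after the example, the only impediment to synchronization on $M$ is a mismatch, modulo the length $3$ of the cycle $3,4,5$, between the steps at which probability mass reaches state $2$: mass arriving at $2$ directly from state $1$ must be delayed differently from mass arriving at $2$ otherwise, so that both portions end up on the same state of the cycle at the same time. The memory of $\alpha$ distinguishes exactly these two kinds of histories (playing $\sigma_2$ right after state $1$ and $\sigma_1$ otherwise) and inserts the compensating offset; a short finite computation then exhibits an index $N$ after which $X_n^\alpha$ is, up to a residual term tending to $0$, concentrated on one state of the cycle that rotates with period $3$. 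Hence $\lim_n\norm{X_n^\alpha}=1$, so $\alpha$ is strongly, and therefore weakly, synchronizing.

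For (ii), fix a memoryless $\beta$ and consider the Markov chain $M_\beta$ obtained by resolving $M$ with $\beta$; its distribution sequence coincides with $X_0^\beta,X_1^\beta,\dots$. By Proposition~\ref{proposition-mc}, in $M_\beta$ every transient state $\ell$ has $\limsup_n X_n^\beta(\ell)=0$ and every recurrent state has $\limsup_n X_n^\beta(\ell)>0$. The crux is to argue that, whatever (possibly randomized) distribution $\beta$ assigns at state $2$, a fixed positive fraction of the mass is permanently trapped on the ``wrong'' residue modulo $3$ inside the cycle region: a single response at $2$ cannot realign both of the phase-incompatible streams identified in part (i), and randomizing between $\sigma_1$ and $\sigma_2$ only splits the mass across both phases. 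Combining this with the structure of $M$, one sees that however $\beta$ behaves at the remaining states $1,3,4,5$, positive probability still flows into the cycle along histories of at least two distinct residues modulo $3$; consequently $M_\beta$ has two distinct states that simultaneously carry positive probability at infinitely many common steps, whence $\limsup_n\norm{X_n^\beta}\le 1-c$ for some constant $c>0$ depending on $M_\beta$. So $\beta$ is not weakly synchronizing, hence not strongly synchronizing, which proves the proposition.

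The step I expect to be the real work is (ii): it must cover \emph{every} memoryless strategy, so it cannot inspect the choice at state $2$ in isolation but has to exploit the global shape of $M$ (that mass unavoidably reaches state $2$ at incompatible times, and that no single absorbing state of $M$ is reachable with probability $1$ under a memoryless strategy), and it must handle proper randomization at state $2$, not just the two pure choices. Pinning down the constant $c$ and the exact periodic bottom structure of $M_\beta$ is where the bookkeeping concentrates; part (i) and the reductions between the two objectives are routine.
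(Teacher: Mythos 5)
Your proposal is correct and takes essentially the same route as the paper: the paper's proof of this proposition is precisely the example of Figure~\ref{f6}, where the memoryful strategy $\alpha$ is strongly (hence weakly) synchronizing while no memoryless strategy can realign the probability mass that enters the cycle $3,4,5$ at incompatible phases modulo $3$, so none is even weakly synchronizing. Your explicit reduction between the two objectives and your sketch of the memoryless-failure argument just spell out what the paper leaves as an easy check on the same single witness.
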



\begin{theorem}\label{theo-perf-col-r1}
For a perfect information game over an MDP~$M$, there exists a
strategy~$\alpha$ such that $M$ with strategy~$\alpha$
 is \textbf{strongly synchronizing},
if and only if the perfect-information subset construction
$M^{\P}$ for~$M$, has an accessible  cycle $C^{\P}$ such
that $\abs{\Delta(C^{\P})}=1$, and for $G \in \Delta(C^{\P})$ and
for all $g \in G$, $\abs{g}=1$.
\end{theorem}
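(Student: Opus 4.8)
The plan is to prove both directions by connecting the limiting behavior of the distributions $X^\alpha_n$ to the structure of cycles in $M^\P$. For the forward direction, suppose $\alpha$ is strongly synchronizing, so $\lim_{n\to\infty}\norm{X^\alpha_n}=1$. First I would observe that a strongly synchronizing strategy can be assumed pure: at each history, rounding $\alpha$ to put all mass on an action maximizing the "peak" can only help, and one can make this precise by the fact that $\norm{X_n}$ is a maximum over states of a sum of history probabilities. Given a pure strategy, the reachable support at step $n$ — call it $s_n\subseteq L$ — is exactly a path $s_0\hat\sigma_0 s_1\hat\sigma_1\cdots$ in $M^\P$, where $\hat\sigma_n$ records the action $\alpha$ chooses at each state of $s_n$. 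Since $M^\P$ is finite, this infinite path eventually revisits a cell, producing an accessible cycle $C^\P$. The key claim is that the synchronization condition forces $\abs{\Delta(C^\P)}=1$ and the unique minimal recurrent cyclic set $G$ to consist of singletons. The idea: each $g\in\Delta(C^\P)$ corresponds to a "sub-chain" of mass that survives around the cycle forever; if two minimal recurrent cyclic sets $G\neq G'$ exist, they carry disjoint (or at least incomparable) positive mass that persists, so $\norm{X_n}$ stays bounded away from $1$ along the cycle — contradiction. Similarly, if the surviving set $g_i$ at some position has $\abs{g_i}\geq 2$, the mass inside $g_i$ is split among $\geq 2$ states and (because $g_i$ is minimal recurrent, no state inside it can be abandoned) this split persists forever, again bounding $\norm{X_n}$ away from $1$.

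For the converse, suppose $M^\P$ has an accessible cycle $C^\P=s_0\hat\sigma_0\cdots s_{d-1}\hat\sigma_{d-1}s_d$ with $\Delta(C^\P)=\{G\}$ and $G=g_0\cdots g_d$ all singletons, say $g_i=\{\ell_i\}$. I would construct a pure strategy $\alpha$ as follows. First, from $\Supp(\mu_0)=L_I$ drive the system to the cell $s_0$ (possible since $C^\P$ is accessible), using along the way the action functions witnessing accessibility. Then repeat the cycle $C^\P$ forever, at each cell $s_j$ playing (state-dependently) the action $\hat\sigma_j(\ell)$ for $\ell\in s_j$. The crucial point is to show that under this strategy $\norm{X^\alpha_n}\to 1$. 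Since $G=\{\ell_0\},\dots,\{\ell_{d-1}\}$ is the \emph{unique} minimal recurrent cyclic set and every state of $s_j$ eventually maps (after going around) into $g_j$, all mass that does not "drain into" the singleton cycle must be leaving every other recurrent structure — but minimality of $G$ together with $\abs{\Delta(C^\P)}=1$ means there is no other place for mass to persist. Formally, I would argue that the Markov chain induced on the cycle (after collapsing the $d$-step iteration into one step, i.e. looking at $\delta^d$) has $\ell_0$ as its \emph{only} recurrent class within $s_0$, so by Proposition~\ref{proposition-mc} applied to this chain, the mass on every other state of $s_0$ tends to $0$ and hence $X^\alpha_{nd}(\ell_0)\to 1$; the same holds at each phase $j$, giving $\lim_n\norm{X^\alpha_n}=1$.

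The step I expect to be the main obstacle is the precise correspondence between minimal recurrent cyclic sets and the recurrent classes of the $d$-step Markov chain on the cycle, and in particular showing that "$\abs{\Delta(C^\P)}=1$ with singleton $G$" is exactly equivalent to "the $d$-step chain restricted to $s_0$ has a single recurrent state." One direction is clean: a recurrent class of the $d$-step chain, unrolled, gives a recurrent cyclic set, and minimal ones give the bottom classes. The subtlety is the other way: a recurrent cyclic set is defined via $Post$ (support reachability) rather than via positive-probability recurrence, so I must check that minimality in the $\subseteq$-ordering of recurrent cyclic sets coincides with being a bottom strongly connected component of the support graph, and that this in turn matches recurrence in the chain. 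Once that dictionary is established, both implications follow from Proposition~\ref{proposition-mc} essentially by bookkeeping: "more than one bottom class" or "a bottom class of size $\geq 2$" each leave at least two states with $\liminf$ positive mass at infinitely many aligned steps, capping $\norm{X_n}$ strictly below $1$, while a unique singleton bottom class forces all competing mass to vanish.
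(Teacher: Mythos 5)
Your sufficiency argument is essentially the paper's: drive the support to the cycle, loop forever, and analyse the induced finite Markov chain (the paper builds an explicit product chain $M'$ with states $(i,\ell)$ rather than your $d$-step chain restricted to $s_0$, shows the unique recurrent cyclic set gives exactly its recurrent states, and concludes with Proposition~\ref{proposition-mc}); the ``dictionary'' you flag as the main obstacle is precisely what the paper proves there, and it goes through. The genuine gap is in the necessity direction. First, your purification step (``rounding $\alpha$ to the action maximizing the peak can only help'') is unjustified: strong synchronization is a limit property, and greedily maximizing $\norm{X_{n+1}}$ can destroy synchronization later, so this does not yield a pure synchronizing strategy. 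Second, even given a pure strategy, its support evolution is \emph{not} a path in $M^{\P}$ in general, because a pure strategy is history-dependent: two histories ending in the same state at the same step may receive different actions, so there is no well-defined action function $\hat{\sigma}_n: L \to \Sigma$ per step. The paper gets around both problems at once by a different construction: it first extracts from the synchronizing outcome the sequence $I = \q_{n_0}\q_{n_0+1}\cdots$ of states carrying mass $>1-\epsilon$, proves the key structural lemma that for $\epsilon < \nu/(1+\nu)$ (with $\nu$ the least positive transition probability) there must exist an action $\sigma$ with $Post_{\sigma}(\q_n)=\{\q_{n+1}\}$ a \emph{singleton}, and then defines a pure strategy $\beta$ depending only on the step and the current state (follow $I$ deterministically; off $I$, copy the action of a shortest positive-probability history of $\alpha$ rejoining $I$), which therefore does induce a path in $M^{\P}$. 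That singleton-successor lemma is the heart of the necessity proof and is absent from your proposal: without it you cannot even exhibit a recurrent cyclic set all of whose components are singletons.

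Your fallback — arguing that any minimal recurrent cyclic set with two incomparable companions or a non-singleton component would trap persistent mass and cap $\norm{X_n}$ below $1$ — has two further holes. Mass is trapped by a recurrent cyclic set only if the strategy actually keeps playing the cycle's action functions, which your strategy is not constructed to do (the paper instead chooses the cycle between two visits of the cell $S$ to be \emph{longer} than the maximal drain time \textit{MaxPath} from any state of $S$ to $I$, and uses this purely combinatorially to show no second recurrent cyclic set can start inside $S$; note also that a short cycle between two visits of $S$ need not satisfy the theorem's condition, so the choice of cycle matters). And the persistence claim itself presupposes the correspondence between recurrent cyclic sets and recurrent states of an induced chain, plus a simultaneity statement (two recurrent states at the same cyclic index keeping non-vanishing mass at the same aligned times) that Proposition~\ref{proposition-mc}'s $\limsup$ alone does not give. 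So the overall plan points in the right direction, but the necessity half needs the $\nu/(1+\nu)$ singleton-successor argument, a purification that is measurable in $(\mbox{step},\mbox{current state})$ only, and the long-cycle (\textit{MaxPath}) argument for uniqueness of the minimal recurrent cyclic set.
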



\begin{proof} \emph{Sufficient condition.}
We suppose that the
perfect-information subset construction
$M^{\P}$ for
$M$, has an accessible  cycle
$C^{\P}=s_0~\hat{\sigma}_0 ~\dots  s_d$ such
that $\abs{\Delta(C^{\P})}=1$, and for $G \in \Delta(C^{\P})$ and
for all $g \in G$, we have $\abs{g}=1$.
Since this cycle is accessible, there exists a finite path
$P=p_0 \hat{\sigma}'_0 p_1 \dots p_{m-1} \hat{\sigma}'_{m-1} p_{m}$ in $M^{\P}$
from $ p_0 = L_I$ to $ p_{m}=s_0=s_d$ (See \figurename~\ref{f1}).
Consider the pure strategy $\alpha$ as follows
\begin{displaymath}
  \alpha((L \times \Sigma)^k \ell) = \left\{
     \begin{array}{ll}
        \hat{\sigma}'_k(\ell) & $if$~ 0 \leq k < m,\\
      \hat{\sigma}_{(k-m)~mod~d}(\ell) & $if$~ m \leq k.
     \end{array}
   \right.
\end{displaymath}


\indent Let us construct a finite Markov chain $M'$ in a way that
its long term behavior simulates the long term behavior of the MDP
$M$ under the strategy $\alpha$ for synchronizing objectives.
This Markov chain is $M'=(L',\mu'_0, \delta')$
where $L'=\{(i,\ell)
\mid 0 \leq i < (m+d) \mbox{ and } \ell \in L\}$,
the initial distribution $\mu'_0$ is defined as follows

\begin{displaymath}
 \mu'_0((i,\ell))= \left\{
     \begin{array}{ll}
       \mu_0(\ell) & $ if $ i=0 \\
0 & otherwise.
     \end{array}
   \right.
\end{displaymath}

\noindent and the probability transition function $\delta'$ is
defined as follows
\begin{displaymath}
  \delta'((i,\ell))((i',\ell'))= \left\{
     \begin{array}{ll}
     \delta(\ell,\hat{\sigma}'_i(\ell))(\ell') &$if$~ (0 \leq i < m),~(i'=i+1),~ (\ell \in p_i) $ and $ (\ell' \in p_{i'}),\\
      \delta(\ell,\hat{\sigma}_{i-m}(\ell))(\ell') &$if$~ (m \leq i < m+d) ,~(i'=m+(i-m+1)~mod~d),\\
      &\hfill ~ (\ell \in s_{i-m}) $ and $ (\ell' \in s_{i'-m}),\\
0 & otherwise.
     \end{array}
   \right.
\end{displaymath}

The idea is that each cell $p_i$ ($0 \leq i < m$) of the path $P$
and, similarly, each cell $s_i$ ($m \leq i < m+d$) of the cycle
$C^{\P}$ corresponds to $\abs{L}$ states in the Markov chain $M$ (one for each state of the MDP $M$).
 The value of $\delta'((i,\ell))((i',\ell'))$ shows the
probability to reach in one step, the state $(i',\ell')$ from the
state $(i,\ell)$; semantically it gives the probability to go from
$\ell$ to $\ell'$ at step $i$.
We show that (a)  if the Markov chain $M'$ is strongly synchronizing,
then so is the MDP $M$ under the strategy $\alpha$ and that (b)
$M'$ is strongly synchronizing.


\begin{figure}[t]
\begin{center}
\includegraphics[scale=0.5]{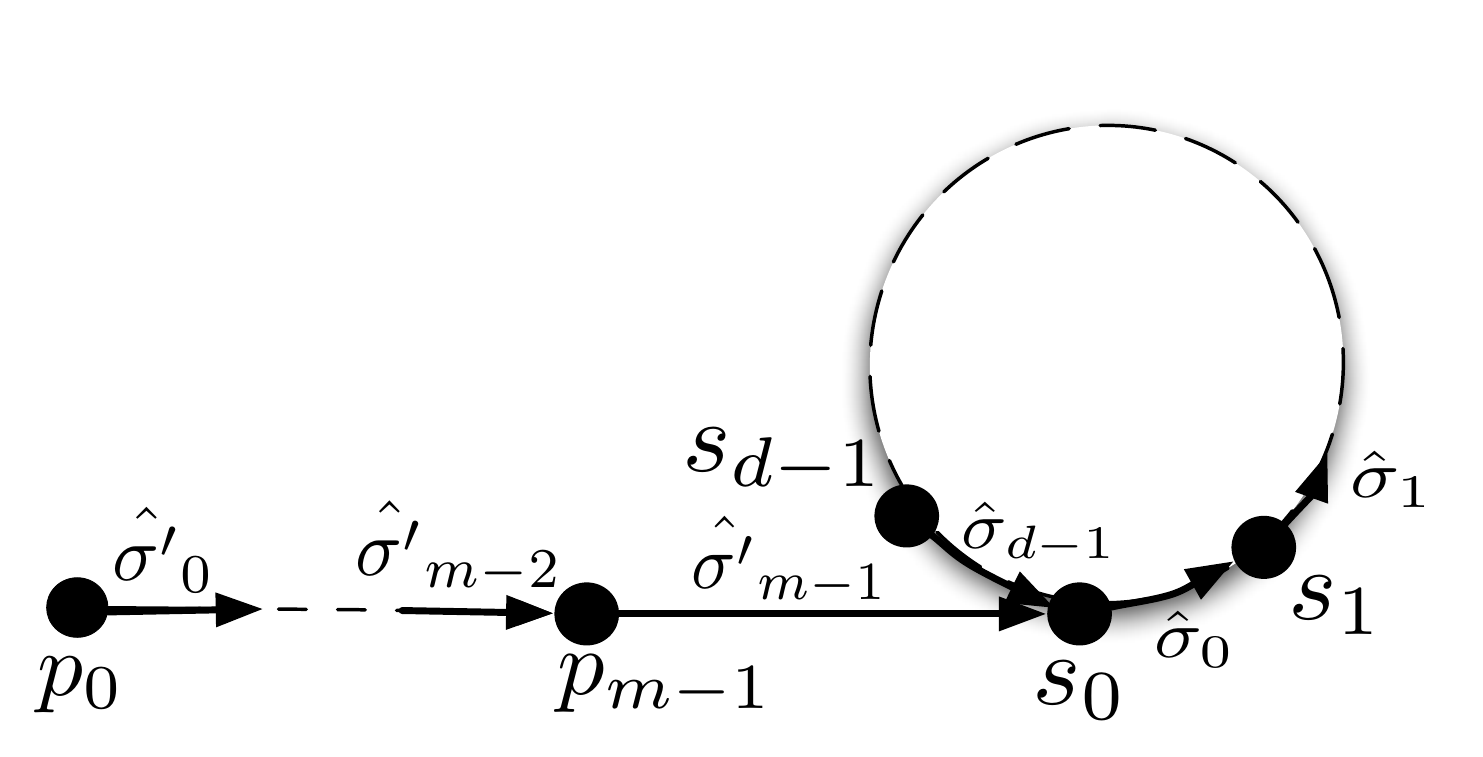}
\caption{An accessible cycle $C^{\P}$ of $M^{\P}$ which is
reachable by a finite path $p_0, \dots, p_{m}$.}\label{f1}
\end{center}
\end{figure}

Proving (a) is straightforward from the definition of the Markov chain $M'$. Each
state of the MDP $M$ corresponds to $m+d$ state of $M'.$
Then if,
from some point,
the mass of probability accumulates in one state of
$M'$ and afterward moves totally to another one,
it happens also in $M$.
In detail,
let the sequence $X^{\alpha}_i$ ($i \in \nat$) denote the outcome of the MDP $M$ under the strategy $\alpha$,
and  $X'_i$ ($i \in \nat$) denote the probability distribution at step $i$ generated by the Markov chain $M'$.
Note that  $X^{\alpha}$ is a random variable over $\abs{L}$ entries,
but  $X'$ is over $\abs{L}\cdot(m+d)$ entries which has at most $\abs{L}$ non-zero entries. Let us compute and compare the non-zero entries of these two random variable sequences. For $\ell \in L$:
\begin{flushleft}
$X^{\alpha}_0(\ell)=\mu_0(\ell)=X'_{0}((0,\ell))$ and we have $X'_{0}((j,\ell))=0$ for all $j \neq 0$.
\end{flushleft}
\begin{flushleft}
    $X^{\alpha}_1(\ell)=
\sum_{\ell' \in L}\mu_0(\ell')\cdot \delta (\ell',\alpha(\ell'))(\ell)=
\sum_{\ell' \in L}\mu_0(\ell')\cdot \delta (\ell',\hat{\sigma}'_0(\ell'))(\ell)=   \sum_{\ell' \in L}\mu_0(\ell')\cdot \delta'((0,\ell'))((1,\ell))=
X'_{1}((1,\ell))$ and we have $X'_{1}((j,\ell))=0$ for all $j \neq 1$.
\end{flushleft}
In the next step, let us compute these random variables for $i<m$:

\begin{flushleft}
$X^{\alpha}_i(\ell)=
\sum_{\ell_0,\ell_1,\dots \ell_{i-1} \in L}\mu_0(\ell_0)\cdot \delta (\ell_0,\alpha(\ell_0))(\ell_1)\cdot \delta (\ell_1,\alpha(\ell_0 ~\alpha(\ell_0)~\ell_1))(\ell_2) \dots \cdot \delta (\ell_{i-1},\alpha(\ell_0 ~\alpha(\ell_0)~\ell_1\dots \ell_{i-1}))(\ell)=
 \sum_{\ell_0,\ell_1,\dots \ell_{i-1} \in L}\mu_0(\ell_0)\cdot\delta (\ell_0,\hat{\sigma}'_0(\ell_0))(\ell_1)\cdot\delta (\ell_1,\hat{\sigma}'_1(\ell_1))(\ell_2)\dots \cdot\delta (\ell_{i-1},\hat{\sigma}'_{i-1}(\ell_{i-1}))(\ell)
=\sum_{\ell_0,\ell_1,\dots \ell_{i-1} \in L}\mu_0(\ell_0) \cdot\delta'((0,\ell_0))((1,\ell_1)) \cdot \delta'((1,\ell_1))((2,\ell_2)) \dots
 \cdot \delta'((i-1,\ell_{i-1}))((i,\ell)) =
X'_{i}((i,\ell))$.
\end{flushleft}

We, also, have $X'_{i}((j,\ell))=0$ for all $j \neq i$, these results give $\norm{X^{\alpha}_i}=\norm{X'_i}$ for $i<m$.
At the end, consider  $i \geq m$:

\begin{flushleft}
$X^{\alpha}_i(\ell)=
\sum_{\ell_0,\ell_1,\dots \ell_{i-1} \in L}\mu_0(\ell_0)\cdot \delta (\ell_0,\alpha(\ell_0))(\ell_1)\cdot \delta (\ell_1,\alpha(\ell_0 ~\alpha(\ell_0)~\ell_1))(\ell_2) \dots \cdot \delta (\ell_{i-1},\alpha(\ell_0 ~\alpha(\ell_0)~\ell_1\dots \ell_{i-1}))(\ell)=
\sum_{\ell_0,\ell_1,\dots \ell_{i-1} \in L}\mu_0(\ell_0)\cdot\delta (\ell_0,\hat{\sigma}'_0(\ell_0))(\ell_1)\cdot\delta (\ell_1,\hat{\sigma}'_1(\ell_1))(\ell_2)\dots \cdot\delta (\ell_{m-1},\hat{\sigma}'_{m-1}(\ell_{m-1}))(\ell_{m})
\cdot\delta (\ell_{m},\hat{\sigma}_{0}(\ell_{m}))(\ell_{m+1})
\dots \cdot\delta (\ell_{i-1},\hat{\sigma}_{(i-m)~mod~d}(\ell_{i-1}))(\ell)
=\sum_{\ell_0,\ell_1,\dots \ell_{i-1} \in L}\mu_0(\ell_0) \cdot\delta'((0,\ell_0))((1,\ell_1)) \cdot \delta'((1,\ell_1))((2,\ell_2)) \dots
 \cdot \delta'((m-1,\ell_{m-1}))((m,\ell_{m})) \dots
 \cdot \delta'((m+(i-m)~mod~d,\ell_{i-1}))((m+(i-m)~mod~d,\ell)) =
X'_{i}((m+(i-m)~mod~d,\ell))$.
\end{flushleft}

We, also, have $X'_{i}((j,\ell))=0$ for all $j \neq i$, this results give $\norm{X^{\alpha}_i}=\norm{X'_i}$ for  $i \geq m$.
We have shown that $X^{\alpha}_i(\ell)=X'_i((j,\ell))$ where for $0 \leq i <m$, we have $j=i$, and for $i \geq m$, we have $j=m+(i-m)~mod~d$.
This simply gives $\norm{X^{\alpha}_i}=\norm{X'_i}$ for $i \in \nat$;
meaning that if the Markov chain $M'$ is synchronizing,
so is the MDP $M$ under the strategy $\alpha$.

To show (b), we study transient and recurrent states of the Markov chain
$M'$.
Suppose that $G \in \Delta(C^{\P})$ is the only recurrent cyclic set of the cycle, and it includes $d$ elements as $g_0, \dots g_{d-1}$.
Let $R$ be the set of states  $(m+i,\ell)$ such that $\ell \in g_i$, for $0 \leq i < d$.
We claim that the states of $R$ are the only recurrent states in the Markov chain $M'$.

\begin{itemize}
    \item First, we can see that the states of $R$ are  recurrent.
        By construction, the states of $R$ are strongly connected.
In addition, we have to prove that if $(m+i,\ell) \in R$ and $(m+i,\ell) \rightarrow (m+j, \ell')$, then $(m+j, \ell') \in R$. This holds by induction on the equality $\cup_{\ell \in g_i}
Post_{\sigma_i(\ell)}(\ell) = g_{i+1}$. Note that $(m+i,\ell) \in R$ implies that $\ell \in g_i$; and if $(m+i,\ell) \rightarrow (m+j, \ell')$ then $\ell'$ has to lie in $g_j$.

    \item Now, we  show that the states of $R$ are the \textbf{only} recurrent      states.
    By contradiction, suppose that there is another set $R'$ of recurrent states in the Markov chain $M'$.
By Proposition \ref{proposition-mc} and since the states $(i,\ell)$ ($0 \leq i < m$) are visited only once, then they could not be recurrent;
therefore we discuss only on the states  $(m+i,\ell)$  with $0 \leq i < d$ of the Markov chain $M'$.
Let $g'_i$ denote all states included in $\{\ell \mid (m+i,\ell) \in R'\} \cap s_i$ for $0 \leq i < d$.
The construction of the Markov chain implies that a state $(m+i,\ell)$ can only have outgoing edges toward some states $(m+(i+1) ~mod~d,\ell')$;
hence  $g'_i \neq \emptyset$ for all $0 \leq i < d$.
On the other hand, the definition of recurrent states requires that each accessible state from $(m+i,\ell) \in R'$ could access $(m+i,\ell)$,  therefore $ \cup_{\ell \in g'_i}Post_{\sigma_i(\ell)}(\ell) =g'_{i+1}$.
It is a contradiction with $\abs{\Delta(C^{\P})}=1$.
\end{itemize}

\indent Based on Proposition \ref{proposition-mc}, for the transient states $(k,\ell)$,
the probability $X_n((k,\ell))$ vanishes for $n \to \infty$.
Since for all $g \in G$, we have $\abs{g}=1$,
the support of $X_n$ ($n>m$) contains only one recurrent state.
Thus, the probability mass accumulates in that state:
for all $\epsilon>0$, for all $n>n_0$ there is a state $(i,\ell)$ with
$X_n((i,\ell))> 1-\epsilon$, that is $\norm{X^{\alpha}_n} >
1-\epsilon$.
Hence, $\lim_{n \to \infty} \norm{X^{\alpha}_n} = 1$
and~$M'$ is strongly synchronizing.
Therefore, so is the MDP $M$ under the strategy $\alpha$.
\\
\\
\noindent \emph{Necessary condition.} Assume that the MDP $M$ with
strategy $\alpha$ is strongly synchronizing.
Then $\forall \epsilon>0 \cdot  \exists n_0 \in \nat \cdot \forall n \geq n_0 \cdot \exists\q_n$ 
such that $X^{\alpha}_{n}(\q_n)> 1-\epsilon$.
Moreover the state $\q_n$ is unique, and we show below that it is independent 
of~$\epsilon$ (assuming $\epsilon < \frac{1}{2}$).

Let $\nu$ be the smallest probability among all probability distributions of the MDP $M$ (i.e., $\nu=\min_{\ell \in L, \sigma \in \Sigma,\ell' \in \Supp(\delta(\ell,\sigma))}(\delta(\ell,\sigma)(\ell'))$).
Let $\epsilon < \frac{\nu}{1+\nu}$.
We claim that for all $n \geq n_0$,
there exists some action $\sigma \in \Sigma$
such that $Post_{\sigma}(\q_n)=\{\q_{n+1}\}$ is a singleton.
Toward contradiction, assume that
for all $\sigma \in \Sigma$,
the statement $Post_{\sigma}(\q_n) \neq \{\q_{n+1}\}$ is satisfied.
The   probability which does not inject to $\q_{n+1}$ (from $\q_n$) is at least $\nu \cdot (1-\epsilon)$. And since $M$ is strongly synchronizing, we have:
\begin{center}
$1- \epsilon \leq  \norm {X^{\alpha}_{n+1}} \leq 1-\nu \cdot (1-\epsilon)$
\end{center}
 This  gives $\epsilon \geq \frac{\nu}{1+\nu}$ which is a contradiction.
Therefore, for all $n \geq n_0$, there exists $\sigma \in \Sigma$
such that
 $Post_{\sigma}(\q_n)=\{\q_{n+1}\}$.
This implies that the infinite sequence of states $I=\q_{n_0} \q_{n_0+1} \dots$ 
is uniquely defined.

The sequence $I$ is used to define a pure synchronizing strategy $\beta$ 
from the randomized synchronizing strategy~$\alpha$. This construction 
implies that the pure strategies are sufficient for strongly synchronizing objectives.
We define the pure strategy $\beta$ as follows:
\begin{itemize}
    \item for $h \in \Hists (M)$ with $\abs{h}=i$ and  $\Last(h) =\q_i$, we define $\beta(h)=\sigma$ where $Post_{\sigma}(\Last(h))=\{\q_{i+1}\} $,
\item for $h \in \Hists (M)$ with  $\abs{h}=i$ and $\Last(h) \neq \q_i$,
we define $\beta(h)= \Action(h',i)$

where $h'\in \Hists(M)$ is the shortest possible history such that 
(1)~$\State(h',i) = \Last(h)$,
(2)~$Pr^{\alpha}(h')$ $>0$, and  
(3)~$\Last(h')=\q_j$ with $\abs{h'}=j$.
One might notice that a reachable state $\Last(h)$ with a strictly positive 
probability $Pr^{\alpha}(h) >0$, has to access a state of $I$ 
(such as $\Last(h')=\q_j$ where $\abs{h'}=j$); 
otherwise the MDP $M$ with strategy $\alpha$, 
would not be strongly synchronizing. 
Consequently, the history $h'$ defined above always exists.
\end{itemize}

As a result, we can define SizePath$(h)=\abs{h'}-\abs{h}$ to be the size of 
shortest path from $\Last(h)$ to  the infinite sequence $I$.
Note  that for $h$  with $\abs{h}=i$ and  $\Last(h) =\q_i$, we define SizePath$(h)=1$.
It is easy to see that  the MDP $M$ with pure strategy $\beta$ is also strongly synchronizing.

In the following, we show that there exists a 
cycle  $C^{\P}$ of $M^{\P}$ which has only one recurrent cyclic set $G$, 
and all $g \in G$ are singleton.
By construction, we have $\beta(h)=\beta(h')$ for all histories $h,h' \in \Hists(M)$ 
with $\Last(h)=\Last(h')$ and $\abs{h}=\abs{h'}$. Therefore the pure strategy 
$\beta$ induces an infinite path $P_{\beta},$ in the perfect-information 
subset construction~$M^{\P}$.
Since the state space of $M^{\P}$ is finite, some cell $S$ has to be visited 
infinitely many times along~$P_{\beta}$.
The path between two visits to $S$ along $P_{\beta}$ is a cycle (not 
necessarily a simple cycle) of ~$M^{\P}$.
We study one of the these cycles (starting at $S$ and coming back there), 
and prove that this cycle satisfies the conditions of the theorem.

Let $\Inf(I)$ denotet the set of all states visited infinitely often along $I$.
Hence, there exists $N_{\Inf} \geq n_0$ such that $\forall i\geq N_{\Inf}: \q_{i} \in I \Rightarrow \q_{i} \in \Inf(I)$.
Let  $K_{1}$ be the first step after $N_{\Inf}$ in which  the path $P_{\beta}$ visits~$S$.
Let \textit{MaxPath}=$\max_{h \in \Hists(M), Pr^{\beta}(h)>0, \abs{h}= K_{1}}($SizePath$(h))$, 
 be the length of the longest path (among the shortest ones) from one reachable state 
at step $K_{1}$, to the infinite sequence $I$.

Let $C^{\P}$ be the cycle starting in $S$ at step $K_1$, and coming back to this 
state in some step $K_2>K_1+$\textit{MaxPath}.
We claim that this cycle $C^{\P}$ has only one recurrent cyclic set
$G$, and all subsets  $g \in G$ are singleton:
\begin{enumerate}
    \item $G=\{\{\q_i \} \mid \q_{i} \in I $ for $K_1 \leq i \leq K_2\}$ is a   
recurrent cyclic set. We already have proved that  there exists $ \sigma \in  \Sigma$ such that 
$Post_{\sigma}(\q_n)=\{\q_{n+1}\}$ ($n \geq n_0$). Note that for state $q_n$, 
the action $\sigma$ is chosen by the cycle. 
    \item $G$ is the only recurrent  cyclic set. Each state included in $S$ 
reaches, in at most \textit{MaxPath} steps,  one state of $I$. Hence, 
the cell $S$, as the first element of $C^{\P}$, cannot have another subset $g'$ 
constructing another  recurrent cyclic set.
\end{enumerate} 
We have proved that for a strongly synchronizing  MDP $M$, the perfect information 
subset construction for $M$, has a cycle $C^{\P}$ such
that $\abs{\Delta(C^{\P})}=1$, and for $G \in \Delta(C^{\P})$ and
for all $g \in G$, $\abs{g}=1$.
\begin{flushright}
$\Box$\end{flushright}
\end{proof}
Through the proof of Theorem \ref{theo-perf-col-r1}, we have seen
that for all strategies $\alpha$ such that an MDP $M$ with the
strategy $\alpha$ is strongly synchronizing, there is a pure
strategy that satisfies the strongly synchronization condition. We
will see that this is also the case for weakly synchronizing
objective (see the proof of Theorem \ref{theo-perf-col-r2}).


\begin{corollary}
 For both strongly and weakly synchronizing objectives, pure
strategies are sufficient in MDPs.
\end{corollary}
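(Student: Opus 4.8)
The plan is to read the corollary off the two characterization theorems, both of which are proved through a direction that already exhibits a \emph{pure} witness strategy. For the strongly synchronizing case, suppose $M$ with some randomized strategy $\alpha$ is strongly synchronizing. The necessary-condition part of the proof of Theorem~\ref{theo-perf-col-r1} extracts from $\alpha$ the uniquely determined infinite sequence of states $I = \q_{n_0}\q_{n_0+1}\cdots$ and then builds the pure strategy $\beta$ that plays along $I$ while the current history ends on $I$ at the matching step and otherwise follows a fixed shortest $\alpha$-feasible path back to $I$; that argument already checks that $M$ with $\beta$ is strongly synchronizing. Hence a pure strategy is enough. Equivalently, one may chain the two directions of Theorem~\ref{theo-perf-col-r1}: synchronizability under some strategy forces the cycle condition on $M^{\P}$, and the sufficient-condition direction then constructs an explicitly pure strategy from such a cycle.

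For the weakly synchronizing case I would invoke the analogous characterization, Theorem~\ref{theo-perf-col-r2}: weak synchronizability under some strategy is equivalent to a structural condition on $M^{\P}$, and the ``only if'' direction of that theorem produces a pure strategy realizing the weakly synchronizing objective. Combining the two cases yields the statement.

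The corollary itself carries little load; the real work sits inside the two characterization theorems, and the one delicate point --- already discharged in the proof of Theorem~\ref{theo-perf-col-r1} --- is verifying that the purified strategy $\beta$ stays synchronizing. The probability mass that $\beta$ diverts off the spine $I$ must flow back onto $I$, and within a bounded number of steps (the quantity \textit{MaxPath}), so that for every $\epsilon>0$ the norm $\norm{X^{\beta}_n}$ eventually exceeds $1-\epsilon$; this rests on the facts, established in that proof, that every $\alpha$-reachable state of positive probability can reach $I$ and that $\beta$ routes it there along a shortest such path. For the weakly synchronizing case the corresponding subtlety, which the proof of Theorem~\ref{theo-perf-col-r2} must handle, is guaranteeing re-concentration on the spine at infinitely many steps rather than at cofinitely many.
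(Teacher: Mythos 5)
Your proposal is correct and follows essentially the same route as the paper: the corollary is read off the necessary-condition parts of Theorems~\ref{theo-perf-col-r1} and~\ref{theo-perf-col-r2}, where the pure strategies $\beta$ (resp.\ $\beta'$) are extracted from the randomized strategy $\alpha$ and shown to remain synchronizing (equivalently, by chaining the two directions of each characterization, since the sufficient-condition direction also exhibits a pure strategy).
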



\begin{theorem}\label{theo-perf-col-r2}

For a perfect information game over an MDP~$M$, there exists a
strategy~$\alpha$ such that $M$ with strategy~$\alpha$
 is \textbf{weakly synchronizing},
if and only if the perfect-information subset construction
$M^{\P}$ for $M$, has an accessible  cycle $C^{\P}$ such
that $\abs{\Delta(C^{\P})}=1$, and for $G \in \Delta(C^{\P})$,
there exists $g \in G$ such that~$\abs{g}=1$.

\end{theorem}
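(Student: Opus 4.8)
The plan is to follow the proof of Theorem~\ref{theo-perf-col-r1}, replacing the $\liminf$-analysis by a $\limsup$-analysis in both directions.

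For the sufficient condition, suppose $M^{\P}$ has an accessible cycle $C^{\P}=s_0\,\hat{\sigma}_0\,\dots\,s_d$ with $\abs{\Delta(C^{\P})}=1$, let $G=g_0\dots g_d$ be its unique minimal recurrent cyclic set, and let $j$ be a position with $g_j=\{\ell^{*}\}$ a singleton. I would reuse verbatim the construction of Theorem~\ref{theo-perf-col-r1}: reach $s_0$ along a finite path $P=p_0\dots p_m$, play the pure strategy $\alpha$ that follows the actions of $P$ and then cycles through the actions of $C^{\P}$ forever, and build the finite Markov chain $M'$ on states $(i,\ell)$ with $0\le i<m+d$. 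Part~(a) of that proof yields $\norm{X^{\alpha}_n}=\norm{X'_n}$ for every $n$, so it suffices to show $M'$ is weakly synchronizing; moreover the recurrent-state analysis there only uses $\abs{\Delta(C^{\P})}=1$, so the recurrent states of $M'$ are exactly the $(m+i,\ell)$ with $\ell\in g_i$. By Proposition~\ref{proposition-mc} the transient mass vanishes, hence for large $n$ almost all the mass of $X'_n$ lies on the recurrent states of level $m+((n-m)\bmod d)$; along the subsequence of steps with $(n-m)\bmod d=j$ that level is the singleton $\{(m+j,\ell^{*})\}$, so $\norm{X'_n}\to 1$ along it and $\limsup_n\norm{X'_n}=1$. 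Thus $M'$, and hence $M$ with strategy $\alpha$, is weakly synchronizing.

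For the necessary condition, assume $M$ with strategy $\alpha$ is weakly synchronizing, i.e.\ $\limsup_n\norm{X^{\alpha}_n}=1$, let $\nu$ be the smallest positive transition probability of $M$, and fix $\epsilon<\frac{\nu}{1+\nu}$. Then $T=\{n\mid\exists\,\q_n\colon X^{\alpha}_n(\q_n)>1-\epsilon\}$ is infinite, and $\q_n$ is unique for $n\in T$. The difference with Theorem~\ref{theo-perf-col-r1} is that the deterministic move $Post_\sigma(\q_n)=\{\q_{n+1}\}$ is only forced when both $n$ and $n+1$ lie in $T$; between two far-apart good steps the mass may genuinely disperse and then re-concentrate. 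I would argue as follows. By pigeonhole over states, some $\ell^{*}$ is the witness at infinitely many good steps, and by a further pigeonhole over the finitely many cells of $M^{\P}$ one may assume these good steps $p_1<p_2<\dots$ all have support $\Supp(X^{\alpha}_{p_i})=S$. Since $X^{\alpha}_{p_i}(\ell^{*})>1-\epsilon$, at least $1-2\epsilon$ of the probability flows from $\ell^{*}$ at step $p_i$ to $\ell^{*}$ at step $p_{i+1}$, so tracing a positive-probability history backwards exhibits a finite path in $M$ from $\ell^{*}$ to $\ell^{*}$; one then purifies $\alpha$, exactly as $\beta$ is built from $\alpha$ and $I$ in Theorem~\ref{theo-perf-col-r1}, into a pure strategy $\beta$ that is constant on histories of equal length ending in the same state, plays so as to reproduce the synchronization on $\ell^{*}$ at every step $p_i$, and takes shortest paths towards the $\ell^{*}$-occurrences otherwise; this $\beta$ must again be weakly synchronizing, which also shows pure strategies suffice.

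Because $\beta$ is constant on histories of equal length ending in the same state, it induces an infinite path $P_\beta$ in $M^{\P}$ along which the cell $S$ recurs; picking a cycle $C^{\P}$ of $M^{\P}$ between two occurrences of $S$ far enough apart that every state of $S$ has reached $\ell^{*}$ in between, one checks as in Theorem~\ref{theo-perf-col-r1} that $C^{\P}$ has a unique minimal recurrent cyclic set $G$ (uniqueness because every state of $S$ reaches $\ell^{*}$ within the chosen horizon, so $S$ carries no competing recurrent subset) and that $G$ contains the singleton $\{\ell^{*}\}$ at the position corresponding to the synchronization steps $p_i$. The hard part is precisely this necessary direction: purifying $\alpha$ into $\beta$ while \emph{keeping} the mass re-concentrating on $\ell^{*}$ infinitely often, and controlling the cell reached between consecutive good steps so that the extracted cycle has exactly one minimal recurrent cyclic set and that set is a singleton at one position; once the recurring cell $S$ and the family of $\ell^{*}$-synchronizations are in place, the combinatorial verification mirrors the corresponding part of the proof of Theorem~\ref{theo-perf-col-r1}.
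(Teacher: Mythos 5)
Your sufficient direction is essentially the paper's own argument (same pure strategy reaching the cycle and then looping it, same auxiliary Markov chain $M'$, same use of Proposition~\ref{proposition-mc}, with concentration read off along the steps congruent to the singleton position $j$), and it is fine.

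The necessary direction, however, has a genuine gap exactly at the step you defer: the claim that the extracted cycle's unique minimal recurrent cyclic set contains a singleton ``at the position corresponding to the synchronization steps $p_i$'', justified only by ``one checks as in Theorem~\ref{theo-perf-col-r1}''. The corresponding check in Theorem~\ref{theo-perf-col-r1} rests on per-step near-determinism: for \emph{every} $n\geq n_0$ there is an action $\sigma$ with $Post_{\sigma}(\q_n)=\{\q_{n+1}\}$, because every step is a good step. That is precisely what fails for weak synchronization: between two good steps $p_i<p_{i+1}$ the mass may disperse, and your observation that at least $1-2\epsilon$ of the mass flows from $\ell^{*}$ at step $p_i$ to $\ell^{*}$ at step $p_{i+1}$ only yields a positive-probability path from $\ell^{*}$ to $\ell^{*}$; it does not show that the full $Post$-image of $\{\ell^{*}\}$ under the cycle's actions closes up to $\{\ell^{*}\}$ again, which is what a singleton member of a recurrent cyclic set requires, since the defining rule is the equality $\cup_{\ell\in g_i}Post_{\hat{\sigma}_i(\ell)}(\ell)=g_{i+1}$ all the way around the cycle. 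Moreover the good steps $p_i$ need not be spaced by the cycle length $d=K_2-K_1$, and the actions your pure strategy $\beta$ plays after step $K_2$ need not repeat the cycle's actions, so nothing ties the later re-concentrations on $\ell^{*}$ to the particular cycle you extracted. The paper closes this hole with an extra device absent from your proposal: it replaces $\beta$ by a strategy $\beta'$ that, once the cycle is reached, repeats the cycle's actions forever, argues that $\beta'$ is still weakly synchronizing, and then, taking $\epsilon<\nu^{d}/(1+\nu^{d})$ where $\nu$ is the least positive transition probability, derives a contradiction from the assumption that propagating $\{\q\}$ once around the cycle yields more than $\{\q\}$: at least $\nu^{d}(1-\epsilon)$ of the mass would then leak away every $d$ steps, forcing $1-\epsilon\leq X^{\beta'}_{n+d}(\q)\leq 1-\nu^{d}(1-\epsilon)$, i.e.\ $\epsilon\geq\nu^{d}/(1+\nu^{d})$. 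Without this looping strategy and $\nu^{d}$-threshold argument (or a substitute for it), the forward implication --- which you yourself flag as ``the hard part'' --- remains unproved; the uniqueness half of your cycle verification (via the MaxPath-style horizon) does match the paper and is fine.
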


\begin{proof}
\noindent \emph{Sufficient condition.} We suppose that the
perfect-information subset construction
$M^{\P}$ for
$M$, has an accessible cycle $C^{\P}$  such
that $\abs{\Delta(C^{\P})}=1$, and for $G \in \Delta(C^{\P})$,
there exists $g \in G$ such that $\abs{g}=1$. Consider a pure
strategy similarly to which presented in proof of Theorem
\ref{theo-perf-col-r1}. Let us, here as well, construct the Markov
chain $M'$, and therefore discuss on transient and recurrent
states of $M'$.

Suppose that $G \in \Delta(C^{\P})$ is the only recurrent cyclic set of the cycle, and it includes $d$ elements as $g_0, \dots g_{d-1}$.
Let $R$ be the set of states  $(m+i,\ell)$ such that $\ell \in g_i$, for $0 \leq i < d$.
As we have shown in proof of Theorem
\ref{theo-perf-col-r1}, the states of $R$ are the only recurrent states in the Markov chain $M'$.
Let $p_n$ be the probability to be in one
state of $R$ at step $n$.
 Based on Proposition \ref{proposition-mc}, for the transient
states $(i,\ell)$ the probability $X_n((i,\ell))$ vanishes for $n \to \infty$, 
which leads to $lim_{n\to \infty} \, p_n=1$.
On the other hand, by hypothesis, for $G \in \Delta(C^{\P})$
there exists $g_j \in G$ ($ 0 \leq j <d$) such that $\abs{g_j}=1$.
Then every $d$ steps, at least once, the probability $p_{m+ k\cdot d + j}$ 
gathers in only one state $(m+j,\ell)$ where $\ell \in g_{j}$.
As a result, for all $k \in \mathbb{N}$,
$\max(\norm{X_{m+d.k}^{\alpha}},\norm{X_{m+d.k+1}^{\alpha}},...\norm{X_{m+d.k+d-1}^{\alpha}}) \geq p_{m+ k\cdot d + j}$. 
We have shown that $lim_{n\to \infty} \, p_n=1$, hence
$limSup_{n \to \infty} \, \norm{X_{n}^{\alpha}}=1$.
\\
\\
\noindent \emph{Necessary condition.}
Assume that the MDP $M$ with
strategy $\alpha$ is weakly synchronizing meaning that 
$limSup_{n \to \infty} \, \norm{X_{n}^{\alpha}}=1$.
Therefore there exists a subsequence $\norm{X^{\alpha}_{i_k}}$ of 
$\norm{X^{\alpha}_{i}}$ which approaches to~$1$ (i.e., $lim_{k \to \infty} \, \norm{X_{i_k}^{\alpha}}=1$), 
where  $i_0 < i_1 < i_2 < \dots$ is an increasing sequence of indices. 
Then, for  $\epsilon < 1/2$ there exists $n_0 \in \nat$ such that for all $n \geq n_0$ 
there exists a (unique) state $\ell$ such that $X^{\alpha}_{i_n}(\ell) > 1/2$.
Let $(\ell,i_n)$ refer to this unique state at position $i_n$. 
Let $\Inf$ be the set of all states $\ell$ such that 
$X^{\alpha}_{i_n}((\ell,i_n)>1/2$ for infinitely many $n \in \nat$. 

Hence, there exists $N_{\Inf} \geq n_0$ such that 
$\forall n\geq N_{\Inf}: X^{\alpha}_{i_n}((\ell,i_n)>1/2 \Rightarrow \ell \in \Inf$. 

Since the state space of the MDP is finite, for a specific $\q \in \Inf$, 
we can define a subsequence $(j_k)_{k \in \nat}$) of $(i_k)_{k \in \nat}$ such that
\begin{enumerate}
   \item $j_0 \geq N_{\Inf}$, and
   \item $X^{\alpha}_{j_k}((\q,j_k)>1/2$, and
   \item $\Supp(X^{\alpha}_{j_k})=\Supp(X^{\alpha}_{j_{k+1}})$; 
in the sequel, we denote to this set by $S$.
\end{enumerate}  

 Let $(\q,j_k)$ refer to the state $\q$ at specific step $j_k$, and $J$ be the 
sequence of this states.  Note that since  $j_k$ is a subsequence of  $i_k$, we have 
$lim_{k \to \infty} \, \norm{X_{j_k}^{\alpha}}=1$ as well. 

We use the infinite sequence  $J$  to construct a winning pure strategy from the
winning randomized strategy $\alpha$.  
Consider the pure strategy $\beta$ as follows.
For $h \in \Hists (M)$ with  $\abs{h}=i$,
we define $\beta(h)= \Action(h',i)$
where

$h'\in \Hists(M)$ is the shortest possible history such that 
(1)~$Pr^{\alpha}(h') >0$, 
(2)~$\Last(h')= (\q,j_k)$  where $\abs{h'}=j_k$ for some $k \in \nat$, and in addition
(3)~$\State(h',i)=\Last(h)$.
One might notice that a reachable state $\Last(h)$ with a strictly positive 
probability $Pr^{\alpha}(h) >0$, has to access the infinite sequence  $J$; 
otherwise the MDP $M$ with strategy $\alpha$ 
would not be weakly synchronizing. Consequently, the history $h'$ defined above 
always exists.

Similarly to the case of strongly synchronizing, we can define SizePath$(h)=\abs{h'}-\abs{h}$ 
to be the size of shortest path from $\Last(h)$ to the  infinite sequence  $J$.

In the following, we show that for a weakly synchronizing MDP $M$, there exists a 
cycle  $C^{\P}$ of $M^{\P}$ which has only one recurrent cyclic set $G$, and there 
exists $g \in G$ which is singleton.
By construction, we have $\beta(h)=\beta(h')$ for all histories $h,h' \in \Hists(M)$ 
with $\Last(h)=\Last(h')$ and $\abs{h}=\abs{h'}$. Therefore the pure strategy $\beta$ 
induces an infinite path $P_{\beta}$ in the perfect-information subset 
construction~$M^{\P}$.
The construction of $\beta$, also implies that the cell $S$ is  visited infinitely 
many times along~$P_{\beta}$.
The path taken between two visits to $S$ along $P_{\beta}$ is a cycle (not 
necessarily a simple cycle) of~$M^{\P}$.
We study one of  these cycles (starting at $S$ and coming back there), 
and prove that this cycle satisfies the conditions of the theorem.

Let $K_{1}$ to be the first step after $N_{\Inf}$ in which  the path $P_{\beta}$ visits $S$.
Let us define \textit{MaxPath} = $\max_{h \in \Hists(M), Pr(h)>0, \abs{h}= K_{1}}($SizePath$(h))$
to be the length of the longest path (among the shortest ones) from a reachable 
state at step $K_{1}$ to the infinite sequence~$J$.

Let $C^{\P}$ be the cycle starting in $S$ at step $K_1$, and coming back to this 
state in some step $K_2>K_1+$\textit{MaxPath}. For convenience, let $d=K_2 -K_1$ 
denote the length of the cycle $C^{\P}$.
We define the winning pure strategy $\beta'$ from the strategy $\beta$ as follows.

\begin{itemize}
  \item for $h\in \Hists(M)$ with $\abs{h} <K_{1}+K_{2}$, we define $\beta'(h)=\beta(h)$.
  \item for $h\in \Hists(M)$ with $\abs{h}>K_{1}+K_{2}$, we define $\beta'(h)=\beta(h')$ 
where $\abs{h}=d\cdot m+ \abs{h'}$ for some $m \in \nat$, 
and $h'$ is a history with $ K_{1} \leq \abs{h'}\leq K_{1}+K_{2} $ and $\Last(h)=\Last(h')$. 
\end{itemize}

In fact, the path corresponding to the strategy $\beta'$ first reaches the cycle $C^{\P}$, 
and then forever follows this cycle.
The strategy $\beta'$ as well as the strategy $\beta$ is weakly synchronizing.  
We claim that this cycle $C^{\P}=s_0 ~\hat{\sigma}_0~ \cdots~ s_d$  ($s_0=s_d$) 
has only one recurrent cyclic set
$G$, and there exists  $g \in G$ which is singleton:
\begin{enumerate}
    \item First we prove that this cycle has one recurrent cyclic set.
	       The size of the cycle is more than \textit{MaxPath} which shows 
that some elements of the  infinite sequence  $J$  are visited along the cycle.
Suppose that  $(\q,j_{k'})$ is the last visited state of $J$ along the cycle, 
and $K'=j_{k'}-K_1$ is the index of cell $s_{K'}$ including this state.
Let us construct a singleton subset $g_{K'}=\{(\q,K')\}$.
By induction, let $g_{(K'+i+1)~ mod ~d}=\cup_{\ell \in g_{(K'+i)~ mod ~d} } Post_{\hat{\sigma}_{(K'+i)~ mod ~d}}(\ell)$ 
for all $0 \leq i < d$.
Note that, for $i=d-1$, the set $g_{K'}$ is computed.
By definition, the set $G=\{g_0,~g_1,~\cdots,~g_{d-1}\}$ is a recurrent cyclic set, 
if after the computation, we still have $g_{K'}=\{(\q,K')\}$.

We claim that  $g_{K'}=\{(\q,K')\}$.
By contradiction, suppose that $g_{K'} \neq \{(\q,K')\}$ is satisfied. 
We have $\limsup_{n \to \infty}\norm{X^{\beta'}_{n}}=1$.  
Then $\forall \epsilon>0 \cdot  \exists n_0 \in \nat \cdot \forall n \geq n_0 \cdot \exists \ell$ 
such that $X^{\beta'}_{n}(\ell)> 1-\epsilon$. On the other hand, by definition of $J$,
we know that the mass of probability in states of $J$ are more than $1/2$, 
and in addition we know that all states of the cycle inject probability to $J$; 
these show that the visited states of $J$ along the cycle are candidates to 
concentrate the probability mass. Let $\nu$ be the smallest probability among 
all probability distributions of the MDP $M$ 
(i.e., $\nu=\min_{\ell \in L, \sigma \in \Sigma,\ell' \in \Supp(\delta(\ell,\sigma))}(\delta(\ell,\sigma)(\ell'))$).
Let $\epsilon < \frac{\nu^{d}}{1+\nu^{d}}$,
and $X^{\beta'}_{n}((\q,K'))> 1-\epsilon$ where $n>n_0$.
The  probability which does not inject to $(\q,K')$ (from  $(\q,K')$ after $d$ steps), 
is at least $\nu^{d} \cdot (1-\epsilon)$.
We have:
\begin{center}
$1- \epsilon \leq X^{\beta'}_{n+d}((\q,K')) \leq 1-\nu^{d} \cdot (1-\epsilon)$
\end{center}
This  gives $\epsilon \geq \frac{\nu^{d}}{1+\nu^{d}}$ which is a contradiction.
Therefore, we have constructed a recurrent cyclic set $G$ for the cycle, and have 
shown that one element of $G$ is singleton.

 \item We can see that $G$ is the only recurrent  cyclic set.  
Each state included in $S$ reaches, at most in \textit{MaxPath} steps,  
one state of $J$. Hence, the cell $S$, as the first element of $C^{\P}$, 
can not have another subset $g'$ constructing another  recurrent cyclic set.

\end{enumerate} 
We have proved that for a weakly synchronizing  MDP $M$, the perfect information 
subset construction for $M$, has a cycle $C^{\P}$ such
that $\abs{\Delta(C^{\P})}=1$, and for $G \in \Delta(C^{\P})$, there exists 
$g \in G$ such that $\abs{g}=1$.
\begin{flushright}
$\Box$\end{flushright}
\end{proof}


\begin{example}
The MDP depicted in \figurename~\ref{f3} (the initial distribution
is $\mu_0(1)=1$ and $\mu_0(i)=0$ for $i \in \{2,\dots,5 \}$) with
strategy $\alpha$ which defined as follows $\alpha((L \times
\Sigma)^{*}L)(\sigma)=1/2$ for $\sigma \in \Sigma$,
 is weakly synchronizing. Note that $L=\{1,\dots ,9\}$, $\Sigma =\{
\sigma_1,\sigma_2 \}$.
\end{example}


\begin{figure}[t]
\begin{center}
\includegraphics[scale=0.40]{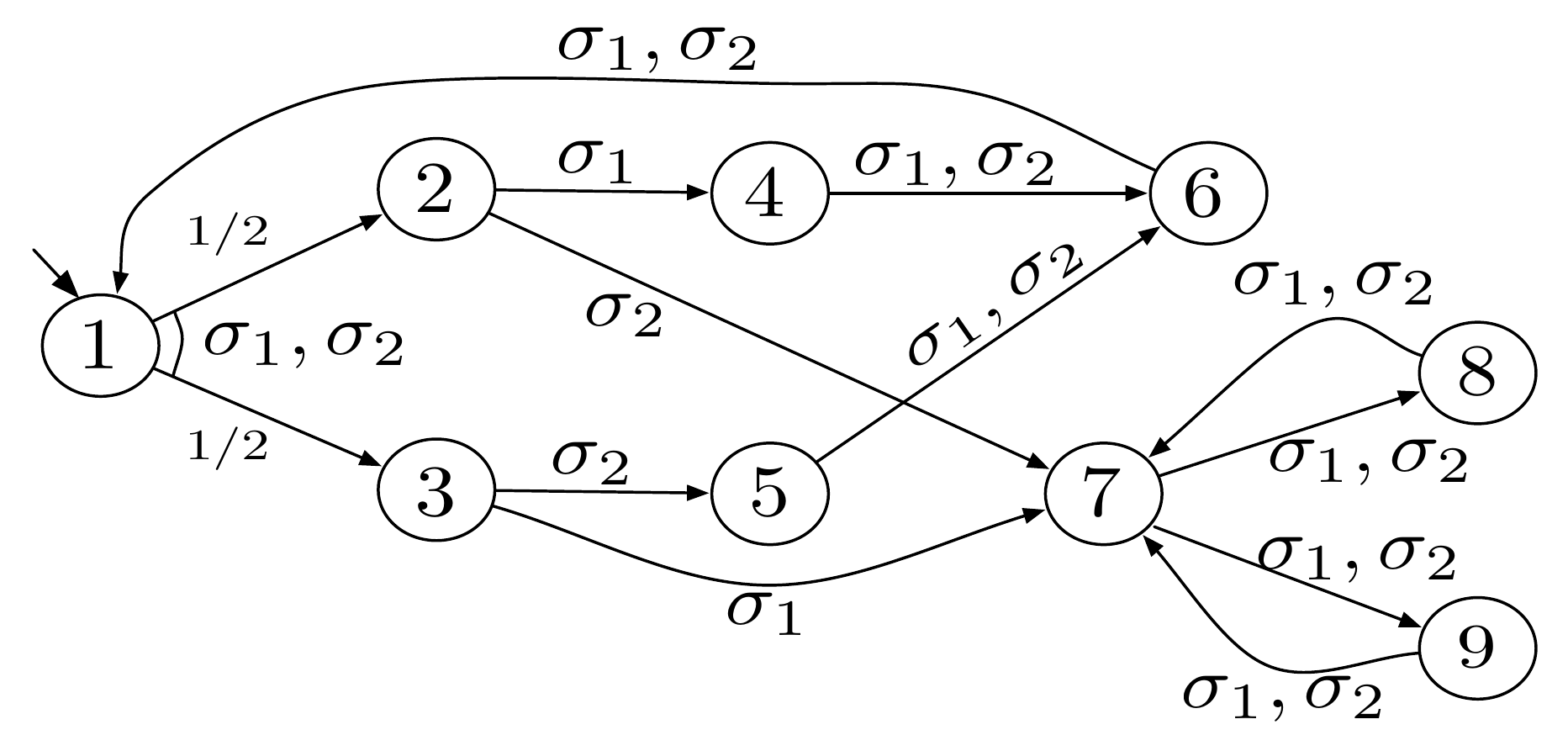}
\caption{A weakly
synchronizing MDP.}\label{f3}
\end{center}
\end{figure}


\section{Synchronizing objectives for Blind strategies}

We have defined a blind one-player stochastic game where the player is not
allowed to observe the current state of the game. We use a
characterization of synchronizing blind strategies to show that
the existence of synchronizing blind strategies can be decided. We
first present an example where the player is blind and has a
strategy to make the game synchronizing.


\begin{example}
The MDP depicted in \figurename~\ref{f2} (the initial distribution
is $\mu_0(1)=1$ and $\mu_0(i)=0$ for $i \in \{2,\dots,5 \}$) with
blind strategy $\alpha$ which defined as following
 $\alpha((L \times
\Sigma)^{*}L)(\sigma)=1/2$ for $\sigma \in \{\Sigma\}$
 is
strongly synchronizing. Note that $L=\{1,\dots ,8\}$, $\Sigma =\{
\sigma_1,\sigma_2 \}$.
\end{example}


\begin{figure}[b]
\begin{center}
\includegraphics[scale=0.40]{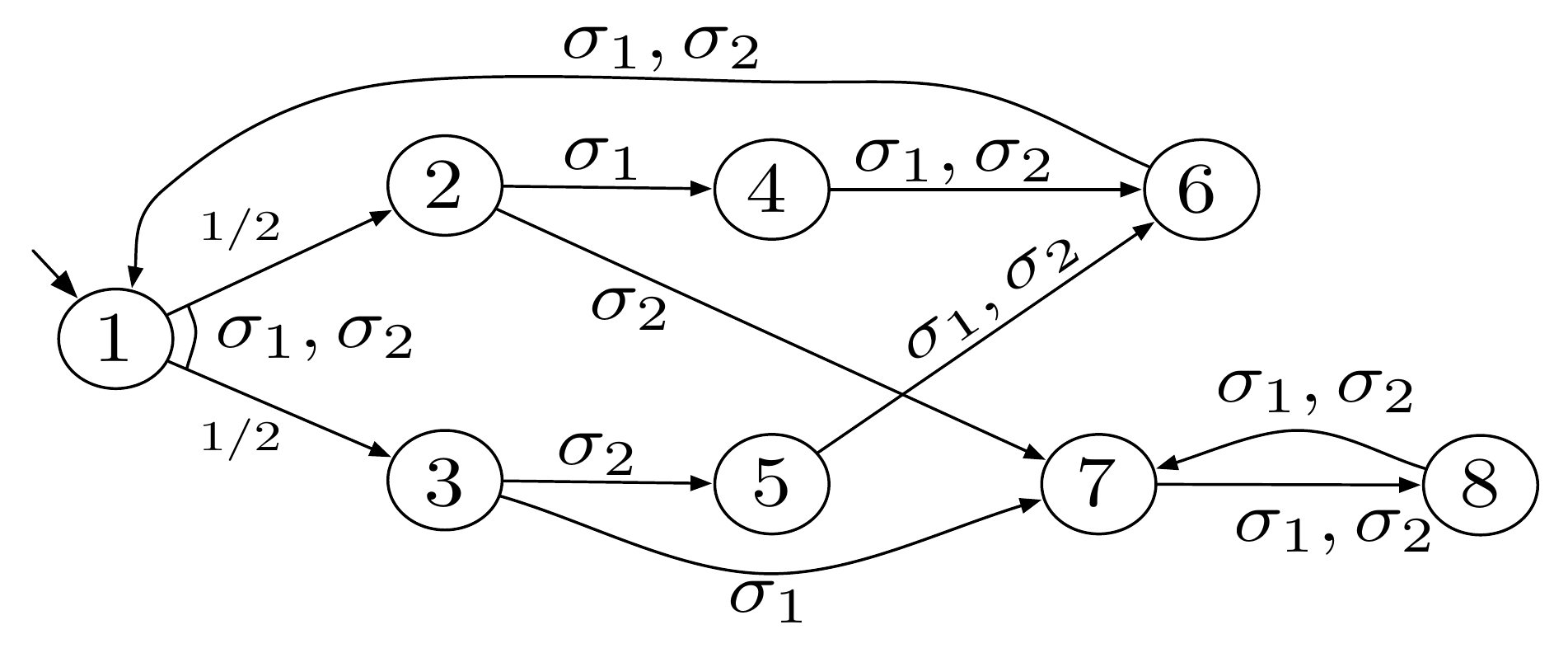}
\caption{An MDP that with some blind strategy is strongly
synchronizing.}\label{f2}
\end{center}
\end{figure}


\begin{theorem}\label{theo-blind-col-r1}
For a blind game over an MDP~$M$, there exists a
strategy~$\alpha$ such that $M$ with strategy~$\alpha$
 is \textbf{strongly synchronizing},
if and only if the blind subset construction
$M^{\B}$ for $M$, has an accessible  cycle $C^{\B}$ such
that $\abs{\Delta(C^{\B})}=1$, and for $G \in \Delta(C^{\B})$ and
for all $g \in G$, $\abs{g}=1$.
\end{theorem}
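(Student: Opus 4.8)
The plan is to mirror the proof of Theorem~\ref{theo-perf-col-r1} almost line for line, replacing the perfect‑information subset construction $M^{\P}$ by the blind subset construction $M^{\B}$ and perfect‑information strategies by blind ones (pure blind strategies being functions $\nat\to\Sigma$). The only structural change is that the action chosen at each round of the path/cycle is now a single letter of $\Sigma$ rather than a function $\hat\sigma\colon L\to\Sigma$; consequently the strategies we build depend on the round number alone, which is exactly the definition of a blind strategy.

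\emph{Sufficient condition.} Assume $M^{\B}$ has an accessible cycle $C^{\B}=s_0\,\sigma_0\,s_1\cdots s_{d-1}\,\sigma_{d-1}\,s_d$ with $\abs{\Delta(C^{\B})}=1$ and all $g$ in the unique $G\in\Delta(C^{\B})$ singletons, reached by a finite path from $L_I$ to $s_0$. Let $\beta$ be the pure blind strategy that plays the actions of this path and then cycles through $\sigma_0,\dots,\sigma_{d-1}$ forever. I would build the finite Markov chain $M'$ exactly as in Theorem~\ref{theo-perf-col-r1} (its states pair a round index modulo $m+d$ with a state of $L$); the computation giving $\norm{X^{\beta}_n}=\norm{X'_n}$ carries over unchanged, since it never used that actions depended on the current state. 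Then I would show the recurrent states of $M'$ are precisely the $(m+i,\ell)$ with $\ell$ the single element of $g_i$: these form one deterministic cycle, strongly connected and closed under the defining equality $\bigcup_{\ell\in g_i}Post_{\sigma_i}(\ell)=g_{i+1}$, and any further recurrent state would produce a second, incomparable recurrent cyclic set of $C^{\B}$, contradicting $\abs{\Delta(C^{\B})}=1$. By Proposition~\ref{proposition-mc} the transient mass vanishes, and since the recurrent part holds exactly one state per round, $\norm{X'_n}\to1$; hence $M$ with $\beta$ is strongly synchronizing.

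\emph{Necessary condition.} Assume $M$ with a blind strategy $\alpha$ is strongly synchronizing. The argument of Theorem~\ref{theo-perf-col-r1} isolating the peak states goes through even for randomized $\alpha$: if every action failed to collapse $\q_n$ onto $\q_{n+1}$, then whatever distribution $\alpha$ plays a fraction at least $\nu(1-\epsilon)$ of the mass at $\q_n$ escapes $\q_{n+1}$, forcing $\epsilon\ge\nu/(1+\nu)$. So for $\epsilon<\min\{\tfrac12,\tfrac{\nu}{1+\nu}\}$ there is $n_0$, a unique peak $\q_n$ at each round $n\ge n_0$, and actions $\sigma_n$ with $Post_{\sigma_n}(\q_n)=\{\q_{n+1}\}$, defining the line $I=\q_{n_0}\q_{n_0+1}\cdots$. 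Since $X^{\alpha}_{n_0}(\q_{n_0})$ is a convex combination of the corresponding quantities for pure blind prefixes, some pure blind prefix $\bar\sigma$ of length $n_0$, using only actions in the supports of $\alpha$, already gives $X^{\bar\sigma}_{n_0}(\q_{n_0})>1-\epsilon$; I set $\beta$ to play $\bar\sigma$ and then $\sigma_{n_0},\sigma_{n_0+1},\dots$. The mass at $\q_{n_0}$ then flows deterministically down $I$, so $\q_n$ remains a $(1-\epsilon)$‑peak of $X^{\beta}_n$ for all $n\ge n_0$, and one checks that every state reachable under $\beta$ is driven into $I$ within a bounded number $B$ of rounds by the common action sequence (otherwise a bounded‑below fraction of probability would sit off every peak, impossible). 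The path $P_{\beta}$ of $\beta$ in $M^{\B}$ then revisits, by finiteness and pigeonhole on the pair (cell, peak state), some cell $S$ with the same peak $\q^*\in S$ at two rounds $K_1<K_2$ with $K_2-K_1>B$; the induced cycle $C^{\B}$ of $M^{\B}$ over $[K_1,K_2]$ has $\{\q_{K_1}\},\dots,\{\q_{K_2}\}$ as a recurrent cyclic set of singletons, and it is the only minimal one because every state of $S$ reaches $I$ within fewer than $K_2-K_1$ rounds, so no other nonempty subset of $S$ can seed a recurrent cyclic set.

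\emph{Main obstacle.} The delicate point is exactly the blind counterpart of the ``routing'' step of Theorem~\ref{theo-perf-col-r1}: there the pure strategy could send each state along its own shortest path to $I$, whereas a blind strategy must commit to one action per round for all states at once. Establishing --- and phrasing correctly --- the claim that under a strongly synchronizing blind strategy the common action sequence $\sigma_n$ drives every reachable state into $I$ within a bounded number of rounds, equivalently that no parasitic recurrent behaviour disjoint from $I$ survives, is where the real work lies; this is also precisely what yields $\abs{\Delta(C^{\B})}=1$ and where the hypothesis of strong (rather than merely weak) synchronization is essential.
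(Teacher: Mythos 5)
Your sufficient direction is fine and coincides with the paper's (the paper builds the same pure blind strategy from the path plus cycle, the same periodic Markov chain $M'$, and the same recurrent/transient analysis). The genuine gap is in the necessary direction, at exactly the step you yourself single out as the main obstacle: the claim that ``every state reachable under $\beta$ is driven into $I$ within a bounded number $B$ of rounds, otherwise a bounded-below fraction of probability would sit off every peak, impossible.'' For your $\beta$ this is unjustified and in general false. The collapsing actions $\sigma_n$ with $Post_{\sigma_n}(\q_n)=\{\q_{n+1}\}$ are obtained by a pure existence argument; they need not be actions that $\alpha$ ever plays, and your $\beta$ (derandomized prefix followed by these actions) is only guaranteed to keep a $(1-\epsilon)$-peak, not to be strongly synchronizing, so nothing forbids a residual mass of up to $\epsilon$ from sitting off $I$ forever. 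Concretely, take $L=\{a,b\}$, $\mu_0(a)=0.9$, $\mu_0(b)=0.1$, an action $\sigma$ with self-loops on $a$ and $b$, and an action $\tau$ sending both $a$ and $b$ to $a$. A strongly synchronizing blind $\alpha$ exists (play $\tau$ once), the peaks are $\q_n=a$ at every step, and $\sigma$ is a legitimate collapsing action at every step, so your $\beta$ may well be $\sigma^{\omega}$; then $b$ never merges, no bound $B$ exists, and the cycle your pigeonhole produces is $\{a,b\}\xrightarrow{\;\sigma\;}\{a,b\}$, which has two minimal recurrent cyclic sets $\{\{a\}\}$ and $\{\{b\}\}$, so it does not witness $\abs{\Delta(C^{\B})}=1$ --- even though the theorem's condition does hold for this MDP via the accessible cycle $\{a\}\xrightarrow{\;\sigma\;}\{a\}$ reached by playing $\tau$. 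Restricting the $\sigma_n$ to the support of $\alpha$ does not repair this (let $\alpha$ mix $\sigma$ and $\tau$ at every step).

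For comparison, the paper itself does not spell this direction out either: it simply invokes the argument of Theorem~\ref{theo-perf-col-r1} ``with blind subset constructions,'' which is precisely where the perfect-information proof uses state-dependent routing of each off-line state to $I$ --- unavailable to a blind player. So you correctly located the missing work, but your proposal asserts rather than proves it. A complete argument has to tie the continuation of the pure word to $\alpha$ itself, not merely to the existence of collapsing actions: for instance, select an infinite action word from the support of $\alpha$ (by a measure or compactness argument over the product distribution on action sequences, using $1-\norm{X^{\alpha}_n}\to 0$) along which the peak bound recurs with the same support cell and peak state, and then prove that strong synchronization of $\alpha$ forces the chosen cycle word to merge the whole cell into the peak chain before the cycle closes; only that yields $\abs{\Delta(C^{\B})}=1$. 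As it stands, the uniqueness of the minimal recurrent cyclic set --- the heart of the necessary direction --- is not established by your construction.
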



\begin{proof}
\noindent \emph{Sufficient condition.} We suppose that the blind
subset construction $M^{\B}=\tuple{\mathcal{L},L_{I}, \Sigma,
\delta^{\B}}$ for $M$,
has an accessible  cycle $C^{\B}$ such
that $\abs{\Delta(C^{\B})}=1$,
and for $G \in \Delta(C^{\B})$ and for all $g \in G$, $\abs{g}=1$.
Since this cycle is accessible,
then there exists a finite path
$P=p_0~\sigma'_0 ~\dots \sigma'_{m-2}~ p_{m-1}\sigma'_{m-1}~
p_{m}$ in $M^{\B}$ from $ p_0 = L_I$ to $ p_{m}=s_0$.
Consider the pure blind strategy $\alpha$ as follows
\begin{displaymath}
  \alpha(k) = \left\{
     \begin{array}{ll}
        \sigma'_k &  $if$~ 0 \leq k < m,\\
      \sigma_{(k-m)~mod~d} &  $if$~ m \leq k.
     \end{array}
   \right.
\end{displaymath}

Let us, construct a Markov chain $M'$ similar to which presented
in proof of Theorem \ref{theo-perf-col-r1}, with the below
probability function:
the probability transition function $\delta'$ is
defined as follows
\begin{displaymath}
  \delta'((i,\ell))((i',\ell'))= \left\{
     \begin{array}{ll}
     \delta(\ell,\sigma'_i)(\ell') &$if$~ (0 \leq i < m),~(i'=i+1),~ (\ell \in p_i) $ and $ (\ell' \in p_{i'}),\\
      \delta(\ell,\sigma_{i-m})(\ell') &$if$~ (m \leq i < m+d) ,~(i'=m+(i-m+1)~mod~d),\\
      &\hfill ~ (\ell \in s_{i-m}) $ and $ (\ell' \in s_{i'-m}),\\
0 & otherwise.
     \end{array}
   \right.
\end{displaymath}

\indent Suppose that $G \in \Delta(C^{\B})$ is the only recurrent cyclic set of the cycle, and it includes $d$ elements as $g_0, \dots g_{d-1}$.
Let $R$ be the set of states  $(m+i,\ell)$ such that $\ell \in g_i$, for $0 \leq i < d$.
As we have shown in proof of Theorem
\ref{theo-perf-col-r1}, the states of $R$ are the only recurrent states in the Markov chain $M'$.

Based on Proposition \ref{proposition-mc}, for the transient states $(k,\ell)$,
the probability $X_n((k,\ell))$ vanishes for $n \to \infty$.
Since for all $g \in G$, we have $\abs{g}=1$,
the support of $X_n$ ($n>m$) contains only one recurrent state.
Thus, the probability mass accumulates in that state:
for all $\epsilon>0$, for all $n>n_0$ there is a state $(i,\ell)$ with
$X_n((i,\ell))> 1-\epsilon$, that is $\norm{X^{\alpha}_n} >
1-\epsilon$.
Hence, $\lim_{n \to \infty} \norm{X^{\alpha}_n} = 1$
and~$M'$ is strongly synchronizing.
Therefore, so is the MDP $M$ under the blind strategy $\alpha$.


\noindent \emph{Necessary condition.} We benefit from arguments presented in Proof of Theorem \ref{theo-perf-col-r1}; but here since the winning strategy is blind, we use blind subset constructions. 
\begin{flushright}
$\Box$\end{flushright}
\end{proof}


\begin{theorem}\label{theo-blind-col-r2}
For a blind game over an MDP~$M$, there exists a
strategy~$\alpha$ such that $M$ with strategy~$\alpha$
 is \textbf{weakly synchronizing},
if and only if the blind subset construction
$M^{\B}$ for $M$, has an accessible  cycle $C^{\B}$ such
that $\abs{\Delta(C^{\B})}=1$, and for $G \in \Delta(C^{\B})$,
there exists $g \in G$ such that~$\abs{g}=1$.
\end{theorem}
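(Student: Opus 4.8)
The plan is to follow the proof of Theorem~\ref{theo-perf-col-r2} almost verbatim, replacing the perfect-information subset construction $M^{\P}$ by the blind subset construction $M^{\B}$ throughout. The only structural difference is that a \emph{pure} blind strategy $\tau:\nat\to\Sigma$ reads nothing but the round number, so it induces a single deterministic run $s_0 s_1 s_2\cdots$ in $M^{\B}$ with $s_{n+1}=Post_{\tau(n)}(s_n)$ and, crucially, $s_n=\Supp(X^{\tau}_n)$ (no cancellation, all terms nonnegative); this run plays the role that the path $P_\beta$ played in $M^{\P}$ in the earlier proof.

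\emph{Sufficient condition.} Given an accessible cycle $C^{\B}=s_0\,\sigma_0\cdots s_{d-1}\,\sigma_{d-1}\,s_d$ ($s_0=s_d$) with $\abs{\Delta(C^{\B})}=1$ and a singleton $g_j=\{\ell_j\}$ in $G=g_0\cdots g_{d-1}\in\Delta(C^{\B})$, take the pure blind strategy that first follows a finite path from $L_I$ to $s_0$ and then loops $C^{\B}$ forever, and build the finite Markov chain $M'$ exactly as in the proof of Theorem~\ref{theo-blind-col-r1}. As in part~(a) of the proof of Theorem~\ref{theo-perf-col-r1}, $\norm{X'_n}=\norm{X^{\alpha}_n}$ for all $n$; and, as in Theorem~\ref{theo-blind-col-r1}, the set $R=\{(m+i,\ell)\mid\ell\in g_i,\ 0\le i<d\}$ is exactly the recurrent part of $M'$. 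By Proposition~\ref{proposition-mc} the probability $p_n$ of being in $R$ tends to $1$; at every step $n=m+kd+j$ all the mass of $M'$ lies in layer $m+j$, whose only recurrent state is the single state $(m+j,\ell_j)$, so $\norm{X'_{m+kd+j}}\ge p_{m+kd+j}\to1$. Hence $\limsup_n\norm{X^{\alpha}_n}=\limsup_n\norm{X'_n}=1$, i.e.\ $M$ with this strategy is weakly synchronizing.

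\emph{Necessary condition.} Assume $M$ with a possibly randomized blind strategy $\alpha$ is weakly synchronizing. I first derandomize: at a step $n$ with $\norm{X^{\alpha}_n}>1-\epsilon$ and heavy state $\ell$, write $X^{\alpha}_n(\ell)$ as the $\alpha$-expectation, over length-$n$ action sequences $\vec\sigma$, of the probability $q(\vec\sigma)$ of being in $\ell$ at step $n$ under the pure blind strategy $\vec\sigma$; Markov's inequality then gives that $\alpha$ samples a $\vec\sigma$ with $q(\vec\sigma)>1-\sqrt\epsilon$ with probability $>1-\sqrt\epsilon$. Choosing a sequence of heavy steps with $\sum_k\sqrt{\epsilon_k}<1$, a union bound produces one infinite action sequence, i.e.\ one pure blind strategy $\tau$, with $\norm{X^{\tau}_{n_k}}\to1$, so $\tau$ is weakly synchronizing. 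Now $\tau$ induces a run in $M^{\B}$ with $s_n=\Supp(X^{\tau}_n)$, and the argument of Theorem~\ref{theo-perf-col-r2} transfers: extract a state $\q$ heavy at infinitely many steps, a subsequence $(j_k)$ along which the support stabilizes to a fixed cell $S$ with $X^{\tau}_{j_k}(\q)>\tfrac12$, and the sequence $J=(\q,j_k)_k$; using the least positive transition probability $\nu$, each state of $S$ reaches $J$ within a bounded number \textit{MaxPath} of steps (otherwise a fixed positive mass would miss $\q$ at all $j_k$, contradicting $\norm{X^{\tau}_{j_k}}\to1$); take the cycle $C^{\B}$ of $M^{\B}$ traced between two visits to $S$ that are more than \textit{MaxPath} apart and past the stabilization point. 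The two claims of Theorem~\ref{theo-perf-col-r2} then carry over unchanged: iterating $Post$ around $C^{\B}$ from the singleton attached to the last $J$-state on the cycle returns to a singleton (else more than $\nu^{d}(1-\epsilon)$ mass leaks away in $d$ steps, contradicting $\limsup_n\norm{X^{\tau}_n}=1$), so $G$ contains a singleton; and since every state of $S$ reaches $J$ within \textit{MaxPath}${}<d$ steps, $S$ cannot support a second recurrent cyclic set, so $\abs{\Delta(C^{\B})}=1$.

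I expect the derandomization step to be the only genuinely new obstacle: the state-dependent ``shortest-history'' strategy used in Theorem~\ref{theo-perf-col-r2} is not blind, so it must be replaced by the averaging/union-bound (or an equivalent K\"onig-style compactness) argument sketched above. Once a pure blind strategy is in hand, its deterministic run in $M^{\B}$ takes over the role of $P_\beta$ and the remainder of the proof is a mechanical rewrite of the perfect-information case with $M^{\B}$ in place of $M^{\P}$.
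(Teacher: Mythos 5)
Your overall route is the same as the paper's: the sufficient direction via the auxiliary Markov chain $M'$ is essentially identical, and for the necessary direction you correctly spot the one place where ``replay Theorem~\ref{theo-perf-col-r2} with $M^{\B}$'' is not automatic: the pure strategy built there from shortest histories is state-dependent, hence not blind. Your derandomization (for a blind strategy the outcome $X^{\alpha}_n$ is a convex combination, over length-$n$ action words, of the outcomes of pure blind strategies; Markov's inequality at each heavy step plus a union bound with $\sum_k\sqrt{\epsilon_k}<1$ yields a single pure blind $\tau$ with $\norm{X^{\tau}_{n_k}}\to 1$) is sound and is a genuine improvement over the paper, whose proof of this direction consists of a one-line reference to Theorem~\ref{theo-perf-col-r2}. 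The observation that $\tau$ induces the deterministic support-run $s_n=\Supp(X^{\tau}_n)$ in $M^{\B}$ is also correct and simplifies the role played by $P_\beta$.

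There is, however, a gap in your final step, the claim that propagating $\{\q\}$ around $C^{\B}$ returns to $\{\q\}$ ``else more than $\nu^{d}(1-\epsilon)$ mass leaks away in $d$ steps, contradicting $\limsup_n\norm{X^{\tau}_n}=1$.'' The cycle $C^{\B}$ is the piece of $\tau$'s support-run between the visits to $S$ at steps $K_1$ and $K_2$; after step $K_2$ the strategy $\tau$ need not play the cycle's action word again, so iterating $Post$ around the cycle (wrapping past position $d$) does not describe $\tau$'s actual evolution, and no contradiction with $\tau$'s distributions follows. You also cannot repair this by propagating forward along $\tau$'s real actions from one heavy step $j_k$ to a later one $j_{k'}$: the leaked mass is then only bounded below by $\nu^{\,j_{k'}-j_k}(1-\epsilon)$, and you would need the defect $1-\norm{X^{\tau}_{j_{k'}}}$ to be smaller than this quantity, whose size depends on $j_{k'}$ itself --- a circular requirement, since $\norm{X^{\tau}_{j_k}}$ may tend to $1$ arbitrarily slowly while $\nu^{\,j_{k'}-j_k}$ decays exponentially. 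This is exactly why the paper's proof of Theorem~\ref{theo-perf-col-r2} passes to the modified strategy $\beta'$ that, once on the cycle, repeats its action labels forever, and derives the $\epsilon\geq\frac{\nu^{d}}{1+\nu^{d}}$ contradiction for $\beta'$ rather than for the original strategy. In your setting the analogous modification (play $\tau$ up to the cycle, then loop its action word with period $d$) is still a blind strategy, so the fix is available, but you must include it, together with the accompanying claim --- asserted in the paper for $\beta'$ --- that the looping strategy is still weakly synchronizing; as written, your argument draws the contradiction against a strategy ($\tau$) whose future play the cycle does not control.
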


\begin{proof}
\noindent \emph{Sufficient condition.} We suppose that the blind
subset construction $M^{\B}=\tuple{\mathcal{L},L_{I}, \Sigma,
\delta^{\B}}$ for $M$, has an accessible  cycle $C^{\B}$ such
that $\abs{\Delta(C^{\B})}=1$, and for $G \in \Delta(C^{\B})$,
there exists $g \in G$ such that~$\abs{g}=1$.

Consider a pure strategy similarly to which presented in proof of Theorem
\ref{theo-perf-col-r1}. Let us, here as well, construct the Markov
chain $M'$, and therefore discuss on transient and recurrent
states of $M'$.

Suppose that $G \in \Delta(C^{\B})$ is the only recurrent cyclic set of the cycle, and it includes $d$ elements as $g_0, \dots g_{d-1}$.
Let $R$ be the set of states  $(m+i,\ell)$ such that $\ell \in g_i$, for $0 \leq i < d$.
As we have shown in proof of Theorem
\ref{theo-perf-col-r1}, the states of $R$ are the only recurrent states in the Markov chain $M'$.
Suppose that $p$ is the probability to be in one
state of $R$ at step $n$.
 Based on Proposition \ref{proposition-mc}, for the transient
states $(i,\ell)$, the probability $X_n((i,\ell))$ vanishes for $n
\to \infty$, which leads $lim_{n\rightarrow \infty}\,p=1$. On the
other hand, by hypothesis, for $G \in \Delta(C^{\B})$, there
exists $g_j \in G$ ($ 0 \leq j <d$) such that $\abs{g_j}=1$. Then
every $d$ steps, at least once, the whole of probability $p$
gathers in only one state $(m+j,\ell)$ where $ \ell \in g_{j}$. As a result, for all
$k \in \mathbb{N}$,
$\max(\norm{X_{m+d.k}^{\alpha}},\norm{X_{m+d.k+1}^{\alpha}},...\norm{X_{m+d.k+d-1}^{\alpha}})
> p$. We have shown that $lim_{n\to \infty} \,p=1$, hence
$limSup_{n \to \infty} \, \norm{X_{n}^{\alpha}}=1$.

\noindent \emph{Necessary condition.} We benefit from arguments presented in Proof of Theorem \ref{theo-perf-col-r2}; but here since the winning strategy is blind, we use blind subset constructions. 
\begin{flushright}
$\Box$\end{flushright}
\end{proof}

From the four previous theorems, we obtain the following result.
\begin{theorem}\label{theo:decidability}
The problem of deciding the existence of a
$\{$perfect-information, blind$\}$ strategy in MDPs for a
$\{$strongly, weakly$\}$ synchronizing objective is decidable.
\end{theorem}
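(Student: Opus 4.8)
The plan is to prove both directions by relating the long-run behaviour of the MDP under a strategy to the recurrent structure of a finite Markov chain obtained by ``unfolding'' an accessible cycle of $M^{\P}$. For the \emph{sufficient condition}, I would start from an accessible cycle $C^{\P} = s_0 \hat{\sigma}_0 \dots s_d$ with $\abs{\Delta(C^{\P})} = 1$ and the unique minimal recurrent cyclic set $G$ consisting only of singletons, together with a finite path $P = p_0 \hat{\sigma}'_0 \dots p_m$ in $M^{\P}$ from $L_I$ to $s_0$. From these I define a pure strategy $\alpha$ that plays $\hat{\sigma}'_k$ along the path for the first $m$ steps, then plays $\hat{\sigma}_{(k-m) \bmod d}$ forever, reading off the action to use at each state from the appropriate action function. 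The core device is a finite Markov chain $M'$ whose states are pairs $(i,\ell)$ with $0 \le i < m+d$, which ``time-stamps'' each state of $M$ by its position along the path-then-cycle; the transition probabilities of $M'$ copy those of $M$ under the chosen action functions. First I would verify, by unrolling the definitions of $X^{\alpha}_n$ and of the outcome of $M'$, that $\norm{X^{\alpha}_n} = \norm{X'_n}$ for all $n$ (the time-stamp coordinate is determined by $n$, so projecting away the coordinate is a bijection on supports). Then I would identify the recurrent states of $M'$ as exactly the set $R$ of states $(m+i,\ell)$ with $\ell \in g_i$: they are strongly connected by construction, and the induction on $\cup_{\ell \in g_i} Post_{\hat{\sigma}_i(\ell)}(\ell) = g_{i+1}$ shows the $R$-class is closed, while uniqueness of the recurrent class follows from $\abs{\Delta(C^{\P})} = 1$ (a second recurrent class would yield a second, incomparable minimal recurrent cyclic set). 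Since each $g_i$ is a singleton, Proposition~\ref{proposition-mc} forces the mass on transient states to vanish, so $\norm{X'_n} \to 1$, hence $M$ is strongly synchronizing.

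For the \emph{necessary condition}, I would assume $M$ with some randomized strategy $\alpha$ is strongly synchronizing, so for every $\epsilon < 1/2$ there is $n_0$ with a unique state $\q_n$ carrying mass $> 1-\epsilon$ for all $n \ge n_0$. The first key step is a quantitative argument: letting $\nu$ be the least positive transition probability in $M$ and taking $\epsilon < \frac{\nu}{1+\nu}$, if no action $\sigma$ satisfied $Post_{\sigma}(\q_n) = \{\q_{n+1}\}$ then at least a $\nu(1-\epsilon)$ fraction of the mass would leak out of $\q_{n+1}$, contradicting $\norm{X^{\alpha}_{n+1}} > 1-\epsilon$; so the sequence $I = \q_{n_0}\q_{n_0+1}\cdots$ is a genuine path of deterministic one-step transitions in $M$, independent of $\epsilon$. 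Next I would convert $\alpha$ into a \emph{pure} strategy $\beta$: on histories whose current state lies on $I$, play the forced singleton action; on other histories of length $i$ ending in $\ell$, copy the action that the shortest $\alpha$-reachable history reaching $I$ takes at position $i$ from $\ell$ (such a history exists since otherwise a positive mass never reaches $I$, contradicting synchronization). Then $\beta$ is again strongly synchronizing and, crucially, $\beta(h)$ depends only on $\abs{h}$ and $\Last(h)$, so $\beta$ induces a single infinite path $P_\beta$ in $M^{\P}$; a cell $S$ recurs infinitely often, and I take a cycle $C^{\P}$ between two visits to $S$ occurring after a step $N_{\Inf}$ past which $I$ visits only states it visits infinitely often, and chosen long enough to exceed $\textit{MaxPath}$, the longest shortest-path-to-$I$ from any reachable state at that step. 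Finally I check that $G = \{\{\q_i\} \mid \q_i \in I \text{ on the cycle}\}$ is a recurrent cyclic set (using the forced singleton actions), that all its members are singletons, and that it is the unique one because every state of $S$ reaches $I$ within $\textit{MaxPath} < d$ steps, so no other closed family of nonempty subsets along $C^{\P}$ can exist. This also delivers the Corollary that pure strategies suffice.

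The weakly synchronizing variant (Theorem~\ref{theo-perf-col-r2}) follows the same skeleton with two changes: for sufficiency, knowing merely that \emph{some} $g_j \in G$ is a singleton, the probability $p_n$ of being in the recurrent set $R$ still tends to $1$, and every $d$ steps the whole of $p_n$ lands in the single state indexed by $g_j$, giving $\max$ over a block of $d$ consecutive norms $\ge p_{m+kd+j} \to 1$, hence $\limsup \norm{X^{\alpha}_n} = 1$; for necessity, I work along a subsequence $j_k$ realizing the $\limsup$, stabilized so that all $X^{\alpha}_{j_k}$ have the same support $S$ and the same dominating state $\q$, build $\beta$ and then a cycle-following strategy $\beta'$, and prove the singleton-on-the-cycle property by the analogous leakage bound with $\nu$ replaced by $\nu^d$ (mass must return to the distinguished state after a full cycle of $d$ steps). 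For the blind theorems (Theorems~\ref{theo-blind-col-r1} and~\ref{theo-blind-col-r2}), everything is identical except that $\hat{\Sigma}$ is restricted to constant action functions, i.e.\ one uses $M^{\B}$ in place of $M^{\P}$; since a blind strategy is exactly a sequence of actions, the induced path in $M^{\B}$ is well defined, and the Markov-chain unfolding and cyclic-set arguments carry over verbatim. Finally, Theorem~\ref{theo:decidability} is immediate: each of the four theorems reduces ``synchronizing for some strategy'' to the existence of an accessible cycle $C^{\P}$ (resp.\ $C^{\B}$) in a \emph{finite} automaton with the stated property on $\Delta(C^{\P})$; since $M^{\P}$ and $M^{\B}$ are effectively constructible finite objects and $\Delta(C^{\P})$ is a finite, computable set for each of finitely many simple cycles, the condition is decidable by exhaustive search. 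The main obstacle throughout is the necessity direction: making rigorous the passage from an arbitrary randomized synchronizing strategy to a pure strategy that fixes a single path in the subset construction, and then pinning down that the cycle one extracts has a \emph{unique} minimal recurrent cyclic set --- this is where the $\textit{MaxPath}$ bound and the quantitative leakage estimate with $\nu$ (or $\nu^d$) do the real work.
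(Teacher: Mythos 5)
Your proposal follows essentially the same route as the paper: prove the four characterization theorems (sufficiency via the time-stamped Markov chain $M'$ and Proposition~\ref{proposition-mc}; necessity via the leakage bound with $\nu$ or $\nu^{d}$, the pure strategy $\beta$ read off the infinite sequence $I$ (resp.~$J$), and the extraction of a long-enough cycle through the recurring cell $S$ using \textit{MaxPath}), and then observe that decidability is immediate because the characterizations are stated on the finite automata $M^{\P}$ and $M^{\B}$. That is exactly how the paper obtains Theorem~\ref{theo:decidability}, which it derives in one line from Theorems~\ref{theo-perf-col-r1}--\ref{theo-blind-col-r2}.

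One caveat on the final step, which the paper leaves implicit but your wording makes slightly wrong: you claim the condition can be checked ``by exhaustive search over finitely many \emph{simple} cycles.'' The characterization theorems quantify over arbitrary accessible cycles (and indeed the cycles produced in the necessity proofs are explicitly not necessarily simple), and it is not obvious that a witness cycle can always be replaced by a simple one, since $\Delta(C^{\P})$ depends on the composed map $g \mapsto \bigcup_{\ell \in g} Post_{\hat{\sigma}(\ell)}(\ell)$ along the whole cycle and this map is not determined by the simple subcycles it decomposes into. The standard repair is to note that $\Delta(C^{\P})$ depends only on the union-preserving map on subsets of the base cell induced by the cycle, and there are only finitely many such maps (a finite transition monoid), so witness cycles of bounded, computable length suffice; with that adjustment your exhaustive-search argument, and hence the decidability conclusion, goes through.
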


We have defined a new class of objectives for Markov decision
processes, and we have given a decidable characterization of
winning strategies for these objectives. Further investigations
will be devoted to studying the precise complexity of the problem,
establishing memory bounds, and extending this framework to
partially-observable MDPs and stochastic two-player games.


\bibliographystyle{eptcs}
\bibliography{biblio}
\end{document}